\newif\ifshort
\newif\ifappendix
\pgfplotsset{compat=newest}
\pgfplotsset{
    discard if not/.style 2 args={
        x filter/.code={
            \edef\tempa{\thisrow{#1}}
            \edef\tempb{#2}
            \ifx\tempa\tempb
            \else
                
            \fi
        }
    }
}
\definecolor{r}{rgb}{1.0, 0.4, 0.4}
\definecolor{r0}{rgb}{1.0, 0.7, 0.4}
\definecolor{r1}{rgb}{1.0, 0.4, 0.0}
\definecolor{r2}{rgb}{0.8, 0, 0.4}
\definecolor{b}{rgb}{0.4, 0.4, 1.0}
\definecolor{b0}{rgb}{0.4, 0.7, 1.0}
\definecolor{b1}{rgb}{0.0, 0.4, 1.0}
\definecolor{b2}{rgb}{0.4, 0.0, 0.8}
\definecolor{g}{rgb}{0.4, 1.0, 0.4}
\newcommand{\appref}[1]{\ifshort{}{\hyperref[proof:#1]{\appsymb}}\fi{}}
\newcommand{\appendixsection}[1]{%
	\ifshort{}\gappto{\appendixProofText}{\section{Additional Material for Section~\ref{#1}}\label{app:#1}}\fi{}
}
\newtheorem{theorem}{Theorem}
\newtheorem{lemma}[theorem]{Lemma}
\newtheorem{observation}[theorem]{Observation}
\newtheorem{proposition}[theorem]{Proposition}
\theoremstyle{definition}
\declaretheorem[style=definition,name=Construction,qed=$\diamond$]{construction}
\crefname{rrule}{Rule}{Rules}
\crefname{construction}{Construction}{Constructions}
\crefname{figure}{Figure}{Figures}
\newcommand{\prob}[1]{\textnormal{\textsc{#1}}}
\newcommand{\probDef}[3]{
	\begin{center}
	\begin{minipage}{0.95\columnwidth}
		\noindent
		\textsc{#1}
		\vspace{5pt}\\
		\setlength{\tabcolsep}{3pt}
		\begin{tabularx}{\textwidth}{@{}lX@{}}
			\textbf{Input:}     & #2 \\
			\textbf{Question:}  & #3
		\end{tabularx}
	\end{minipage}
	\end{center}
}
\DeclarePairedDelimiterX{\abs}[1]{\lvert}{\rvert}{#1}
\DeclarePairedDelimiterX{\norm}[1]{\lVert}{\rVert}{#1}
\DeclarePairedDelimiterX{\ceil}[1]{\lceil}{\rceil}{#1}
\newcommand{\NN}{\mathbb{N}}
\newcommand{\Nzero}{\mathbb{N}_0}
\newcommand{\QQ}{\mathbb{Q}}
\newcommand{\Wone}{\ensuremath{\mathrm{W[1]}}}
\newcommand{\NP}{\ensuremath{\mathrm{NP}}}
\newcommand{\bigO}{\mathcal{O}}
\newcommand{\yes}{\emph{yes}}
\newcommand{\no}{\emph{no}}
\newcommand{\oneto}[1]{[ #1 ]} %
\newcommand{\eps}{\varepsilon}
\DeclareMathOperator{\ed}{\#ed}
\newcommand{\diff}{\ensuremath{\Delta_{\mathrm{ed}}}}
\newcommand{\diffnorm}{\ensuremath{\delta_{\mathrm{norm}}}}
\newcommand{\Diffnorm}{\ensuremath{\Delta_{\mathrm{norm}}}}
\newcommand{\Diffavg}{\ensuremath{\Delta_{\mathrm{avg}}}}
\newcommand{\FCE}{\prob{Mod\-i\-fi\-ca\-tion-Fair Cluster Editing}}
\newcommand{\FCC}{\prob{Mod\-i\-fi\-ca\-tion-Fair Cluster Completion}}
\newcommand{\FCD}{\prob{Mod\-i\-fi\-ca\-ti\-on-Fair Cluster Deletion}}
\newcommand{\CE}{\prob{Cluster Editing}}
\newcommand{\CC}{\prob{Cluster Completion}}
\newcommand{\CTA}{\prob{Cluster Transformation by Edge Addition}}
\newcommand{\smallbinom}[2]{\Bigl(\begin{array}{@{}c@{}}#1\\#2\end{array}\Bigr)}
\newcommand{\ctwo}[1]{\ensuremath \smallbinom{#1}{2}}
\title{\Large\bf Modification-Fair Cluster Editing}
\author{Vincent Froese}
\author{Leon Kellerhals}
\author{Rolf Niedermeier}
\affil{\small
  Technische Universit\"at Berlin, Algorithmics and Computational Complexity, Berlin, Germany\protect\\
  \texttt{\{vincent.froese,\,leon.kellerhals\}@tu-berlin.de}}
\date{}
\begin{document}

\maketitle

\begin{abstract}
	The classic \CE{} problem (also known as \textsc{Correlation Clustering}) asks to transform a given graph into a disjoint union of cliques (clusters) by a small number of edge modifications.
	When applied to vertex-colored graphs (the colors representing subgroups), standard algorithms for the NP-hard \CE{} problem may yield  solutions that are biased towards subgroups of data (e.g., demographic groups), measured in the number of modifications incident to the members of the subgroups.
	We propose a modification fairness constraint which ensures that the number of edits incident to each subgroup is proportional to its size.
	To start with, we study %
	\textsc{Modification-Fair Cluster Editing} for graphs with two vertex colors.
	We show that the problem is NP-hard even if one may only \emph{insert} edges \emph{within} a subgroup; note that in the classic ``non-fair'' setting, this case is trivially polynomial-time solvable.
	However, in the more general \emph{editing} form, the modification-fair variant remains fixed-parameter tractable with respect to the number of edge edits.
	We complement these and further theoretical results with an empirical analysis of our model on real-world social networks where we find that the price of modification-fairness is surprisingly low, that is, the cost of optimal modification-fair solutions differs from the cost of optimal ``non-fair'' solutions only by a small percentage.
\end{abstract}

\section{Introduction}
In recent years, fairness in algorithmic problems has become 
a profoundly studied topic, particularly so in machine learning and related areas.
Clustering problems are fundamental in unsupervised learning 
and optimization in general. In this work, we focus on graph-based 
data clustering, and therein on one of the most basic and best studied 
problems, \CE{} (also known as \textsc{Correlation Clustering}).
The goal is to cluster the vertices into a set of disjoint cliques by (few) edge modifications, that is, edge deletions or insertions.
In the context of fairness, each vertex belongs to a certain subgroup within a social network (e.g.~gender or nationality)
and the goal is to find a solution that guarantees some ``fairness'' with respect to the considered subgroups.
Previous works~\cite{ahmadi2022fair,AEKM20,FM21,ahmadian2022improved} mainly focus on ``output-oriented'' fairness, that is,
the fairness is defined by looking at the resulting clusters,
enforcing that within each cluster, the number of vertices of each group is proportional to the overall number of vertices of the group.
This kind of fairness, while prudent in some scenarios, may be inapt in other contexts, e.g., political districting.

Our main conceptual contribution is to introduce a fairness concept that is
not modeling the fairness of the resulting clusters, but rather the fairness of the clustering process.
In our case, this means that each group should be affected by roughly the same (proportionally to its size) number of edge modifications.
This is motivated as follows: The edge modifications cause some \emph{distortion} of the true social network.
If the distortion for one group is significantly higher than for the others, then this can lead to a systematic bias in any further analysis of the cluster graph.
Hence, this distortion should be proportionally distributed among the groups in order not to yield wrong (biased) conclusions from the resulting clustering.
Imagine a collaboration graph where vertices are researchers from different countries (see \cref{fig:example} for an example).
The five modifications shown in \cref{fig:example}~(b) yield a solution where the number of blue and red vertices per cluster is well balanced; thus the transformation is fair in the ``output-oriented'' fairness setting.
However, most modifications are incident to blue vertices.
The resulting cluster graph might suggest that the researchers from the blue country are barely collaborating with each other but rather with researchers from the red country --- this does not really reflect the ground truth.
The modifications shown in \cref{fig:example}~(c) are more balanced between blue and red vertices.

To mitigate such possible bias as described above, we introduce a colored version of the well-studied NP-hard \CE{} problem, 
where now the criterion of having fair modification cost yields a \emph{process-oriented fairness} concept.
Our modification fairness for \CE{} aims at balanced average distortion among the groups and is similar in spirit to the socially fair variants of $k$-means and~$k$-median~\cite{ABV21,GSV21}, where the maximum average (representation) cost of any group is minimized.
Of course, fairness might come at a price, in that more edge modifications might be required to achieve modification-fair solutions and more computation time might be required to find these.
We perform both a theoretical (algorithms and complexity) and an empirical study.
In a nutshell, we show that our new problem \FCE{} seems computationally slightly harder than \CE{},
but our experimental studies also indicate that the ``price of fairness''  (that is, how many more edits are needed compared to the classic, ``colorblind'' case) is relatively low if one does not aim for perfect fairness.

\begin{figure}
	\centering
	\begin{tikzpicture}[
		colornode/.style = {
			circle,
			draw=#1!70!black,
			very thick,
			fill=#1,
			inner sep=3.0pt,
		},
		nonedge/.style = {thick},
		addedge/.style = {ultra thick,draw=g!80!black},
		deledge/.style = {ultra thick,loosely dotted,draw=g!80!black},
	]

	\def\xscopedist{3.5cm}

	\newcommand{\thenodes}[1]{
		\node at (0,5) {\small (#1)};
		
		\node[colornode=b] at (0,4) (b1) {};
		\node[colornode=b] at (0,3) (b2) {};
		\node[colornode=b] at (0,2) (b3) {};
		\node[colornode=b] at (0,1) (b4) {};

		\node[colornode=r] at (1.3,5) (r1) {};
		\node[colornode=r] at (1.3,4) (r2) {};
		\node[colornode=r] at (1.3,3) (r3) {};
		\node[colornode=r] at (1.3,2) (r4) {};
		\node[colornode=r] at (1.3,1) (r5) {};
	}

	\begin{scope}
		\thenodes{a}
		\draw
			(b1) edge[nonedge] (r1)
			(b1) edge[nonedge] (r2)
			(r1) edge[nonedge] (r2)
			(b1) edge[nonedge] (b2)
			(b1) edge[nonedge,bend right=30] (b3)
			(b2) edge[nonedge] (b3)
			(b3) edge[nonedge] (b4)
			(b2) edge[nonedge] (r3)
			(b3) edge[nonedge] (r4)
			(r3) edge[nonedge] (r4)
			(b4) edge[nonedge] (r5);

	\end{scope}

	\begin{scope}[xshift=\xscopedist]
		\thenodes{b}
		\draw
			(b1) edge[nonedge] (r1)
			(b1) edge[nonedge] (r2)
			(r1) edge[nonedge] (r2)
			(b1) edge[deledge] (b2)
			(b1) edge[deledge,bend right=30] (b3)
			(b2) edge[nonedge] (b3)
			(b3) edge[deledge] (b4)
			(b2) edge[nonedge] (r3)
			(b3) edge[nonedge] (r4)
			(r3) edge[nonedge] (r4)
			(b4) edge[nonedge] (r5)
			(b2) edge[addedge] (r4)
			(b3) edge[addedge] (r3);

	\end{scope}

	\begin{scope}[xshift=2*\xscopedist]
		\thenodes{c}
		\draw
			(b1) edge[deledge] (r1)
			(b1) edge[deledge] (r2)
			(r1) edge[nonedge] (r2)
			(b1) edge[nonedge] (b2)
			(b1) edge[nonedge,bend right=30] (b3)
			(b2) edge[nonedge] (b3)
			(b3) edge[deledge] (b4)
			(b2) edge[deledge] (r3)
			(b3) edge[deledge] (r4)
			(r3) edge[nonedge] (r4)
			(b4) edge[nonedge] (r5);

	\end{scope}

	\end{tikzpicture}
	\caption{
		An exemplary graph $G$ with blue (dark) and red (light) vertices (a) and two transformations of $G$ into a cluster graph (b), (c).
		Inserted edges are marked green (thick), deleted edges are green and dashed.
		(b)~A transformation of minimum size with five modifications.
		Eight modifications are incident to blue, two modifications are incident to red, so the average number of modifications to blue (red) vertices is $\nicefrac{8}{4}$ ($\nicefrac{2}{5}$), and the difference is~$\nicefrac{8}{5}$.
		(c)~Another minimum-size transformation in which the modifications are more balanced between blue and red (difference $\nicefrac{7}{10}$).
	}
	\label{fig:example}
\end{figure}

\paragraph{Related work.}
For a thorough review on fairness in the context of machine learning 
we refer to the survey by~\citet{mehrabi2022bias}.
Closest to our work in terms of the underlying clustering 
problem are studies on fair \textsc{Correlation Clustering}~\cite{ahmadi2022fair,AEKM20,FM21,ahmadian2022improved}. 
These works focus an output-oriented fairness, that is,
proportionality of the clusters.
Facing the \NP-hardness of the problem, these works mainly study polynomial-time approximation algorithms (while we focus on exact solvability).

\citet{chierichetti2017clustering} were the first to study fairness in the context of clustering, studying $k$-median and $k$-center problems.
The works by \citet{ABV21} and \citet{GSV21} for $k$-means and~$k$-median clustering are closest to our fairness concept.
There are numerous further recent works studying fairness for clustering problems~\cite{AEKMMPVW20,BFS21,BFGPS21,CN21,MV20,VY21}.
For a general account on classic \CE{}, we refer to the survey of \citet{BB13}.

\paragraph{Our contributions.}
We introduce \FCE, reflecting a process-oriented fairness criterion in graph-based data clustering:
instead of looking at the outcome, we consider the modification process that yields the clustering.
Here we demand that the average number of modifications at a vertex is balanced among the groups (we focus on two groups).
We parameterize our fairness constraint by the difference between these averages.
For a formal definition of \FCE{}, we refer to the next paragraph.
\cref{tab:results} gives an overview over our theoretical contributions to \FCE{}; the corresponding results are in \cref{sec:complexity}.
\begin{table}[t]
	\centering
	\caption{%
		Our theoretical results for \FCE{} and its restrictions which allow only insertions (\prob{Completion}) or deletions (\prob{Deletion}).
		Here, we denote
		by~$n$ the number of vertices,
		by~$m$ the number of edges,
		by~$\delta$ the fairness constraint,
		by~$k$ the number of modifications, and
		by~$\mu$ the number of mono-colored modifications, i.e., the number of modifications between same-colored endpoints.
		\newline
		$^\dagger$\,(even if only mono-colored modifications are allowed)
		$^\ddagger$\,(even if only one vertex is red)
	}
	\begin{tabular}{@{} r l l l@{}}\toprule
		\prob{Modification-Fair Cluster} $\ldots$	& {Complexity and running time} & Ref.\\
		\midrule
		\midrule
		$\ldots$ \prob{Completion}			& \NP-hard$^{\dagger}$ for any~$\delta \in \bigO(1)$ & \Cref{thm:FCC-NP-hard}\\
		\midrule
		$\ldots$ \prob{Deletion}			& \NP-hard$^{\dagger,\ddagger}$ for any~$\delta \ge 0$ & \Cref{thm:fce-hard}\\
		\midrule
		\multirow{3}{*}{$\ldots$ \prob{Editing}}	& \NP-hard$^{\dagger,\ddagger}$ for any~$\delta \ge 0$ & \Cref{thm:fce-hard}\\
		& $n^{\bigO(\mu)}$ (randomized)			& \Cref{thm:fpt}\\
		& $2^{\bigO(k \log k)}\cdot (n+m)$		& \Cref{thm:fpt}\\
		\bottomrule
	\end{tabular}
	\label{tab:results}
\end{table}
Among other results, we show that \FCE{} remains \NP-hard even if only edge insertions are allowed (in the classic, colorblind variant, this case is trivially polynomial-time).
This requires proving a related number problem to be \NP-hard, which is deferred to \cref{sec:transform}.
Moreover, we show the \NP-hardness of very restricted cases of the general editing version and provide conditional running time lower bounds.
On the positive side, we devise a randomized polynomial-time algorithm for the case that one modifies constantly many mono-colored edges (edges whose both endpoints have the same color).
Moreover, we show the problem to be fixed-parameter tractable with respect to the overall number of edge modifications.
On the empirical side (\cref{sec:experiments}), we demonstrate that while typically computationally hard(er) to find, ``fair solutions'' seem not much more expensive than conventional ones.

\paragraph{Problem definition and initial observations.}
Recall that a graph is a \emph{cluster graph} if and only if each of its connected components is a clique, that is, a completely connected graph.
The family of cluster graphs is also characterized as those graphs that do not contain a~$P_3$ (a path on three vertices) as an induced subgraph.
In \CE{}, we are given a graph~$G$ and an integer~$k \in \Nzero$,
and we are asked whether there is an \emph{edge modification set}~$S \subseteq \binom{V(G)}{2}$ of size at most~$k$
such that the graph~$G_S$ with vertex set~$V(G_S)\coloneqq V(G)$ and edge set~$E(G_S) \coloneqq (E(G) \setminus S) \cup (S \setminus E(G))$ is a cluster graph.
We say that $S$ \emph{transforms} $G$ into~$G_S$.

In our setting, the vertices in~$G$ are colored either \emph{red} or \emph{blue}, i.e., $V(G) = R \uplus B$.
For an edge modification set~$S \subseteq \binom{V(G)}{2}$,
we define $\ed_S(v)\coloneqq |\{e\in S\mid v\in e\}|$ to be the number of edge modifications incident to a vertex~$v$ (that is, the degree of~$v$ in the modification graph, whose edge set is~$S$).
Then
\begin{equation}
	\label{eq:diff}
	\diff(S)\coloneqq \left|\frac{\sum_{v\in R}\ed_S(v)}{|R|}-\frac{\sum_{v\in B}\ed_S(v)}{|B|}\right|
\end{equation}
is the difference of the average numbers of modifications at a red vertex and a blue vertex.

\probDef{\FCE}
{A graph~$G$ with~$V(G) = R \uplus B$, $k\in\NN$, and~$\delta\in\QQ^+$.}
{Is there an edge modification set~$S\subseteq \binom{V(G)}{2}$ with~$|S|\le k$ and $\diff(S) \le \delta$ that transforms~$G$ into a cluster graph?}

Analogously, we define the variants \FCC{} and \FCD{} in which~$S$ may only add edges (i.e., $S \subseteq \binom{V(G)}{2}\setminus E(G)$) and delete edges (i.e., $S \subseteq E(G)$), respectively.

We immediately observe some simple upper bounds on~$\diff$.

\begin{observation}
	\label{obs:bounds}
	For every edge modification set~$S$, the following upper bounds hold:
	\begin{inparaenum}[(i)]
		\item $\diff(S) \le |S|$;
		\item $\diff(S) \le |V(G)|-1$;
		\item $\diff(S) \le \nicefrac{2|S|}{\min\{|R|,|B|\}}$.
	\end{inparaenum}
\end{observation}

\begin{proof}
	The bounds $(i)$ and~$(ii)$ are trivial upper bounds on the maximum and thus also average number of edge modifications at any vertex.
	Bound~$(iii)$ holds as~$\diff(S)$ is at most
	\[
		\max \left\{ \frac{\sum_{v \in R} \ed_S(v)}{\abs{R}}, \frac{\sum_{v \in B} \ed_S(v)}{\abs{B}} \right\}
			\le \max \left\{ \frac{2\abs{S}}{\abs{R}}, \frac{2\abs{S}}{\abs{B}} \right\}
			= \frac{2\abs{S}}{\min\{\abs{R}, \abs{B}\}}.
	\]
	This bound is met when all endpoints of the modifications carry the less frequent color.
\end{proof}

By \cref{obs:bounds}, if~$\delta \ge \min(k,|V|-1,\allowbreak \nicefrac{2k}{\min\{|R|,|B|\}})$, then \FCE{}
is simply the standard, ``colorblind'', \CE{}.

We remark that that our problem definition allows the modification-fair edge modification set to be a non-minimal edge modification set.
If one seeks the most fair \emph{and} minimal edge modification set (of size at most $k$), then this can simply be computed with the standard $P_3$-branching algorithm~\cite{Cai96} which enumerates all solutions of size at most $k$.

\paragraph*{Parameterized Complexity.}
Finally, we recall some basic (parameterized) complexity concepts.
A parameterized problem is \emph{fixed-parameter tractable} if there exists an algorithm
solving any instance~$(x,p)$ ($x$ is in the input instance and $p$~is some parameter---in our case it will be the number~$k$ of edge modifications) in $f(p)\cdot|x|^{\bigO(1)}$ time, where~$f$ is a computable function solely depending on~$p$.
The class~XP contains all parameterized problems which can be solved in polynomial time if the parameter~$p$ is a constant, that is,
in $f(p)\cdot |x|^{g(p)}$~time.
\ifshort{}
The Exponential Time Hypothesis (ETH) claims that there exists a constant~$c > 0$ such that the \textsc{3-SAT} problem cannot be solved in~$\bigO(2^{cn})$ time.
\else{}
The Exponential Time Hypothesis (ETH) claims that the \textsc{3-SAT} problem cannot be solved in subexponential time in the number~$n$ of variables of the Boolean input formula.
That is, there exists a constant~$c > 0$ such that \textsc{3-SAT} cannot be solved in~$\bigO(2^{cn})$ time.
\fi{}
\ifshort{}It\else{}The ETH\fi{} is used to prove conditional running time lower bounds, for example, it is known that one cannot find a clique clique of size~$s$ in an~$n$-vertex graph in $\rho(s)\cdot n^{o(s)}$ time for any function~$\rho$, unless the ETH fails~\cite{CHKX06}.

\section{Modification Fairness: Complexity}
\label{sec:complexity}

We explore the algorithmic complexity of \FCE{} and compare it to its ``colorblind'' counterpart
\CE{} and its restrictions
which either only allow edge deletions (\textsc{Cluster Deletion}) or insertions (\textsc{Cluster Completion}).

First, we show that even restricted special cases of \FCE{} remain NP-hard.
Notably, the corresponding polynomial-time many-one reductions also lead to ETH-based running time lower bounds.

\begin{theorem}\label{thm:fce-hard}
	\FCE{} and \FCD{} are \NP-hard for arbitrary~$\delta\ge 0$ and solvable neither in $2^{o(k)}\cdot |V(G)|^{\bigO(1)}$ nor in $2^{o(\abs{V(G)} + \abs{E(G)})}$ time unless the ETH fails. This also holds 
	\begin{enumerate}[(i)]
		\item if only mono-colored edge modifications are allowed or
		\item if there is only one red vertex.
	\end{enumerate}
\end{theorem}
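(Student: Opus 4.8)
The plan is to reduce from \CE and \textsc{Cluster Deletion} (or, to obtain the fine‑grained bounds cleanly, from the cluster‑transformation problems \CTA/\CTD introduced in the paper), all of which are \NP-hard and admit the claimed ETH lower bounds, in each case by a reduction that is \emph{linear} in the size of the source \textsc{3-SAT} formula so that a $2^{o(k)}\cdot|V|^{\bigO(1)}$-, $2^{o(|V|)}$-, or $2^{o(|E|)}$-time algorithm would refute the ETH. The only genuine complication is the fairness constraint: the construction must force $\diff(S)\le\delta$ for \yes-instances even when $\delta=0$ (perfect fairness), while \no-instances should already be ruled out by the budget~$k$ alone, so that the reduction works uniformly for every $\delta\ge 0$.

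For the mono‑colored restriction (item~1), and at the same time for arbitrary~$\delta$, I would take a hard instance $(G,k_G)$ of \CE (resp.\ \textsc{Cluster Deletion}) and form $G'\coloneqq G\uplus G$ with one copy colored red, the other blue, \emph{no} edges between the copies, and set $k\coloneqq 2k_G$. Since there is no bi‑colored edge, bi‑colored modifications are never necessary (refining each cluster along the bipartition $V_1\uplus V_2$ never increases the cost), so \wilog\ a modification set decomposes as $S=S_R\uplus S_B$, with $S_R$ clustering the red copy and $S_B$ the blue copy; hence $|S_R|,|S_B|\ge\opt(G)$, and $|S|\le 2k_G$ is attainable iff $\opt(G)\le k_G$, in which case copying an optimal clustering into both sides yields $2\opt(G)$ edits split evenly over two equal‑size color classes, i.e., $\diff(S)=0\le\delta$. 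Thus $(G,k_G)$ is a \yes-instance iff $(G',k,\delta)$ is, the construction never uses a bi‑colored modification, and $|V|,|E|,k$ at most double, so all three ETH bounds transfer.

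For a single red vertex~$r$ (item~2) the fairness term collapses to a single arithmetic condition, $\diff(S)=\bigl|(|B|+1)\,\ed_S(r)-2|S|\bigr|/|B|$, coupling the budget $|S|$ with the number $\ed_S(r)$ of $r$-incident edits (note $\sum_{v\in B}\ed_S(v)=2|S|-\ed_S(r)$). Here I would route through \CTD/\CTA: take a hard transformation instance---a disjoint union of cliques of polynomially bounded size that is transformable into a cluster graph by exactly the prescribed number of deletions (resp.\ additions) if and only if the formula is satisfiable---color it blue, attach $r$ to a carefully chosen set of blue vertices, and add a few ``dummy'' cliques used only for padding. These are set up so that within the budget the number of edits inside the gadget is pinned down, $\ed_S(r)$ is forced to a fixed value, and the padding tunes $|B|$ so that the equality $(|B|+1)\,\ed_S(r)=2|S|$ (hence $\diff(S)=0\le\delta$) holds precisely for the \yes-instances, while \no-instances overshoot~$k$; again $|V|,|E|,k$ stay linear in the formula size.

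I expect the main obstacle to be the numerical bookkeeping of item~2: one must arrange simultaneously that (i) the blue gadget's repair cost encodes the numerical \textsc{3-SAT} reduction, (ii) $\ed_S(r)$ is rigid across all feasible solutions, and (iii) the padding makes the divisibility/equality $(|B|+1)\,\ed_S(r)=2|S|$ come out on exactly the \yes-instances---all while keeping the clique sizes polynomial and the whole instance linear in the formula, which is what keeps $|E|$ small enough for the $2^{o(|E|)}$ bound. Establishing \NP-hardness and the ETH bounds for \CTA/\CTD themselves (together with the polynomial bound on clique sizes) is a self‑contained step that I would carry out first.
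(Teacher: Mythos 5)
Your construction for item~1 (two disjoint monochromatic copies of a hard \CE{} instance, $k=2k_G$) is correct and genuinely different from the paper's: the paper instead keeps one blue copy of $G$, adds $k$ red $P_3$s (each forcing one deletion), and pads with isolated vertices to equalize $|R|$ and $|B|$. Both constructions share the crucial design principle that you state explicitly --- the converse direction must be enforced by the budget alone, with fairness only needing to be \emph{achievable} ($\diff(S)=0$) in the forward direction, so that the reduction works uniformly for every $\delta\ge 0$. Your doubling argument (refining clusters along the color classes never increases cost, hence $|S_R|,|S_B|\ge\opt(G)$) is sound and arguably needs less from the source instance than the paper's use of the Komusiewicz--Uhlmann instances; both preserve $k$, $|V|$, $|E|$ linearly, so the ETH bounds transfer either way.

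Item~2 as sketched does not work. First, routing through \CTA{}/\CTD{} is incompatible with the ETH lower bounds: the paper's only hardness proof for \CTA{} goes via the \emph{strongly} NP-hard \textsc{Numerical 3D Matching} problem and builds cliques of size $n^{7d}$, so $|E|$ and $k$ are large polynomials in the source instance and no $2^{o(k)}$, $2^{o(|V|)}$, or $2^{o(|E|)}$ bound follows; a linear-size ETH-hard version of \CTA{}/\CTD{} is not available and there is no reason to expect one (indeed \CTD{} is never even shown hard in the paper). Second, and more fundamentally, your blue gadget is a disjoint union of cliques, i.e., already a cluster graph, so after attaching a single red pendant-like vertex $r$ the budget can only force $O(\deg(r))$ edits; a \no-instance of \CTA{} does not ``overshoot $k$'' --- it is solvable with zero additions --- so the only mechanism left to force the encoded number of edits is the fairness equality $(|B|+1)\ed_S(r)=2|S|$. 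But that constraint is vacuous once $\delta$ is large, contradicting the requirement that the reduction work for \emph{arbitrary} $\delta\ge 0$ (which you correctly identified as requiring budget-only enforcement in your own preamble). The paper avoids all of this by again reducing from the Komusiewicz--Uhlmann \CE{} instances, which have maximum degree six, satisfy $|V|<2k$, and come with a budget \emph{lower} bound: one red vertex $r$ is attached to a degree-six vertex $x$, $k'=k+1$, and isolated blue vertices pad $|B|$ to $2k+1$ so that the forward direction gives $\diff=0$; in the converse, not deleting $\{r,x\}$ would cost six deletions at $x$, leaving budget $k-5$ against a known lower bound of $k-4$, so $\{r,x\}\in S'$ is forced by the budget alone. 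You would need some such budget-rigid source with bounded degree and a local lower-bound argument to make item~2 go through.
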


Both cases use similar reductions, based on the following \NP-hardness result by \citet{KU12} for standard \CE{}.

\begin{proposition}[\cite{KU12}]
	\label{prop:ce-hard}
	\CE{} is \NP-hard, and, assuming the ETH, is neither solvable in $2^{o(k)}\cdot |V(G)|^{\bigO(1)}$, nor in $2^{o(\abs{V(G)})}$, nor in~$2^{o(\abs{E(G)})}$ time, even if all of the following holds:
	\begin{enumerate}[(i)]
		\item all modifications are deletions;
		\item every solution has size at least~$k$;
		\item the graph has less than~$2k$ vertices;
		\item the graph has maximum degree six (and also contains vertices of degree exactly six);
		\item every solution deletes at most four edges incident to any vertex.
	\end{enumerate}
\end{proposition}

We now provide the construction for \cref{thm:fce-hard}(i).
While we construct an instance of \FCE{} in the following, we will later see that we can use the same construction for the \prob{Deletion} variant.

\begin{construction}[for \cref{thm:fce-hard}(i)]
	\label{constr:hard-1}
	Let~$I = (G, k)$ be an instance of \CE{}.
	We may assume that~$I$ has the properties listed in \cref{prop:ce-hard}.
	We construct an instance~$I' = (G', k', \delta)$ of \FCE{} as follows.
	The graph~$G'$ contains a copy of~$G$ with all vertices colored blue.
	Additionally, $G'$ contains~$3k$ red vertices which form~$k$ disjoint~$P_3$s, i.e., paths on three vertices.
	Moreover, we add~$\max\{\abs{V(G)}, 3k\} - 3k$ isolated red vertices and~$\max \{\abs{V(G)}, 3k\} - \abs{V(G)}$ isolated blue vertices to~$G'$ resulting in that the number of red and blue vertices being equal.
	Finally, we set~$k' \coloneqq 2k$ and $\delta = 0$.\footnote{Indeed, the construction works for any $\delta \ge 0$.}
\end{construction}

Let us prove the correctness of the above reduction.

\begin{lemma}
	\label{lem:hard-1}
	Given an instance~$I = (G, k)$ of \CE{}, \cref{constr:hard-1} returns an instance~$I' = (G', k', \delta)$ of \FCE{}
	such that~$I$ is a \yes-instance if and only if~$I'$ is a \yes-instance.
	Moreover, whenever~$I'$ is a \yes-instance, there exists a solution which only deletes edges.
\end{lemma}

\begin{proof}
	Assume first that~$I$ is a \yes-instance.
	By \cref{prop:ce-hard}(i) and (ii), we may assume that any solution for~$I$ requires exactly~$k$ edge deletions.
	Then, deleting the corresponding~$k$ edges in~$G'$ and also one arbitrary edge of each red~$P_3$ in~$G'$ clearly yields a solution~$S'$ of size~$2k=k'$ with~$\diff(S')=0\le\delta$ (as~$G'$ contains the same number of red and blue vertices).
	Note that~$S'$ contains only edge deletions.

	Conversely, let~$(G', k', \delta)$ be a \yes-instance.
	Note that every solution modifies at least one edge of each of the~$k$ red~$P_3$s in~$G'$.
	As the~$P_3$s are all pairwise vertex-disjoint,
	we may assume without loss of generality that every such modification is a deletion.
	Hence, at most~$k$ edge deletions are performed to transform the copy of~$G$ in~$G'$ into a cluster graph.
\end{proof}

We now provide the construction for \cref{thm:fce-hard}(ii).
Again, we will later see that the construction also proves \NP-hardness for the \prob{Deletion} variant.

\begin{construction}[for \cref{thm:fce-hard}(ii)]
	\label{constr:hard-2}
	Let~$I = (G, k)$ be an instance of \CE{}.
	We may assume that~$I$ has the properties listed in \cref{prop:ce-hard}.
	We construct an instance~$I' = (G', k', \delta)$ of \FCE{} as follows.
	The graph~$G'$ contains a blue copy of~$G$ as well as one red vertex~$r$ which is adjacent to an arbitrary vertex~$x$ of degree six from~$G$ (this exists due to \cref{prop:ce-hard}(iv)).
	We further add~$2k - \abs{V(G)} + 1$ isolated blue vertices such that overall~$G'$ contains~$2k+1$ blue vertices.
	Note that~$2k - \abs{V(G)} + 1 > 0$ due to \cref{prop:ce-hard}(iii).
	Finally, we set~$k' \coloneqq k+1$ and $\delta = 0$.\footnote{Just as \cref{constr:hard-1}, this construction works with any $\delta \ge 0$.}
\end{construction}

Again, let us prove the reduction to be correct.

\begin{lemma}
	\label{lem:hard-2}
	Given an instance~$I = (G, k)$ of \CE{}, \cref{constr:hard-2} returns an instance~$I' = (G', k', \delta)$ of \FCE{}
	such that~$I$ is a \yes-instance if and only if~$I'$ is a \yes-instance.
	Moreover, whenever~$I'$ is a \yes-instance, there exists a solution which only deletes edges.
\end{lemma}

\begin{proof}
	If~$(G,k)$ is a \yes-instance with solution~$S$, then $S'\coloneqq S\cup\{\{r,x\}\}$ yields a solution for~$G'$ of size~$k+1=k'$ with~$\diff(S')=|\frac{1}{1} - \frac{2k+1}{2k+1}|=0\le\delta$.
	As we may assume that~$S$ contains only edge deletions, we may assume the same for~$S'$.

	Conversely, suppose that~$(G',k',\delta)$ is a \yes-instance with solution~$S'$.
	By \cref{prop:ce-hard}(ii), any modification set that transforms~$G$ into a cluster graph contains at least~$k$ edge deletions, and each vertex in~$G$ is incident to at most~$4$ deletions.
	Hence, the same holds true for~$S'$ restricted to~$V(G') \setminus \{r\}$. 
	In other words, there are at least~$k = k' - 1$ edge deletions in~$S'$ that are not incident to~$r$ and each vertex in~$V(G)$ is incident to at most~$4$ of them.
	We claim that~$\{r, x\} \in S'$.
	Suppose not.
	Then the $P_3$s induced by~$r$, $x$, and any neighbor~$v \ne r$ of~$x$ must be resolved by either deleting~$\{v, x\}$ or by adding~$\{v, r\}$.
	If we resolve more than one of these~$P_3$s by adding the edges~$\{v, r\}$, then the remaining budget is less than~$k$ and thus does not suffice to transform the remaining graph into a cluster graph.
	So we have to resolve at least five of the~$P_3$s by deleting the corresponding edge~$\{v, x\}$.
	This however contradicts the fact that every vertex in~$V(G)$ is incident to at most~$4$ modifications within~$G$.
	Therefore, $\{r, x\} \in S'$, and the remaining $k$ modifications in~$S'$ are within~$G$; hence~$(G, k)$ is a \yes-instance.
\end{proof}

\cref{thm:fce-hard} now follows from \cref{prop:ce-hard,lem:hard-1,lem:hard-2}, together with the following observation.

\begin{observation}
	\cref{constr:hard-1,constr:hard-2} run in polynomial time.
	Moreover, for any instance~$I' = (G', k', \delta)$ returned by either construction, we have
	$k' \in \bigO(k)$, $\abs{V(G')} \in \bigO(\max\{\abs{V(G)}, k\})$, and~$\abs{E(G')} \in \bigO(\abs{E(G)} + k)$.
\end{observation}

We remark that \cref{thm:fce-hard}~(i) also holds if the maximum degree is six and the maximum number of edge modifications (or deletions) incident to each vertex is at most four.
These are immediate consequences of properties~(iv) and~(v) of \cref{prop:ce-hard}.

Surprisingly, \CC{}, which is trivially solvable in polynomial time, becomes \NP-hard when enforcing fairness.

\begin{theorem}\label{thm:FCC-NP-hard}
	\FCC{} is \NP-hard for every constant~$\delta\ge 0$.
	This also holds if only mono-colored edge insertions are allowed.
\end{theorem}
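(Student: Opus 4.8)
The plan is to reduce from a strongly \NP-hard number‑partitioning problem (for concreteness \textsc{3‑Partition}, though this is essentially the content of the auxiliary \CTA problem): since plain \textsc{Cluster Completion} is trivial, all hardness has to be manufactured by the fairness constraint. I would first record the shape of any solution when only insertions, and only mono‑colored ones, are allowed. An inserted edge can only merge connected components, and a red–blue edge can never be inserted, so a solution amounts to independently choosing how to merge the red connected components of~$G$ into red clusters and the blue ones into blue clusters, and then completing each cluster to a clique; if the red components placed into one cluster have sizes summing to~$T$, this adds $\binom{T}{2}$ minus the number of red edges already present there. Writing $s_R,s_B$ for the numbers of red–red and blue–blue edges of~$S$, we have $|S|=s_R+s_B$ and, since $\sum_{v\in R}\ed_S(v)=2s_R$ and $\sum_{v\in B}\ed_S(v)=2s_B$,
\[
  \diff(S)=\left|\tfrac{2s_R}{|R|}-\tfrac{2s_B}{|B|}\right|.
\]
Moreover, merging two monochromatic cliques of sizes $a,a'$ costs $a\,a'$ extra edges, so completing a colour class without any merging is always cheapest there; the only incentive to merge on a side is to change $s_R$ (resp.\ $s_B$) for the sake of fairness.

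Given a \textsc{3‑Partition} instance $a_1,\dots,a_{3m}$ (each in $(B/4,B/2)$, summing to $mB$) I would build~$G'$ with $3m$ vertex‑disjoint red cliques of sizes $a_1,\dots,a_{3m}$ and one blue gadget whose unique cheapest completion adds a prescribed number~$\beta$ of blue edges (e.g.\ a blue star $K_{1,M_0}$, or a small union of stars), padded on both sides with isolated vertices so that $|R|$ and $|B|$ take prescribed values with $|R|\,\beta/|B|=T^\ast$, where $T^\ast:=m\binom{B}{2}-\sum_i\binom{a_i}{2}$ is the cost of packing the red cliques into $m$ cliques of size~$B$; set $k:=T^\ast+\beta$. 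Any solution must complete the blue gadget, and the tight budget forbids further blue merging, so $s_B=\beta$ and hence $s_R\le k-\beta=T^\ast$; for $\delta=0$ the fairness constraint forces $s_R/|R|=\beta/|B|$, i.e.\ $s_R=T^\ast$. Thus a solution exists iff the red cliques can be merged so as to add exactly $T^\ast$ red edges, i.e.\ iff there is a partition of $\{a_1,\dots,a_{3m}\}$ into parts with sums $T_1,\dots,T_t$ with $\sum_j\binom{T_j}{2}=m\binom{B}{2}$, equivalently $\sum_j(T_j-B)^2=(t-m)B^2$; as $t\ge m$ always, the case $t=m$ yields exactly the \textsc{3‑Partition} solutions, since $a_i\in(B/4,B/2)$ forces each part to have exactly three elements. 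For an arbitrary fixed $\delta\ge 0$ I would multiply all clique sizes (and $B$, and the blue gadget) by a large integer~$N$: one checks that the attainable values of $s_R$ then become exactly $N^2$ times the unscaled ones, whereas the half‑width $\delta|R|/2$ of the fairness window grows only like~$N$, so for large~$N$ the window around~$T^\ast$ again admits only the value~$T^\ast$ (over‑split configurations with $t>m$ and $\sum_j(T_j-B)^2<(t-m)B^2$ lie $\Omega(N^2)$ below~$T^\ast$, and those with larger sum of squares exceed the budget). A polynomial (as opposed to pseudo‑polynomial) reduction needs polynomially bounded source numbers, which is why a strongly \NP-hard problem is used; only \NP-hardness is claimed here, not ETH lower bounds.

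The step I expect to be the main obstacle is the number‑theoretic bookkeeping that turns ``cost~$=T^\ast$'' into ``\textsc{3‑Partition}‑solution'': one must rule out over‑merged partitions with $t>m$ parts that nonetheless satisfy $\sum_j(T_j-B)^2=(t-m)B^2$ exactly (and, in the scaled version, land inside the narrow fairness window), and one must simultaneously realise $T^\ast$, the budget, and the paddings $|R|,|B|$ as integers with $\beta>0$ and the balancing identity $|R|\,\beta/|B|=T^\ast$ holding on the nose. Isolating precisely this phenomenon is, I suspect, the reason the paper singles out the \CTA problem and proves its hardness first, so that the present reduction can invoke it as a black box rather than redo the analysis.
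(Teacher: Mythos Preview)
Your high-level architecture is the paper's: make one colour class have a unique (hence forced) completion that pins down its edge count, then let the fairness constraint translate the budget on the other colour class into an \emph{exact} number of edge additions, so that the remaining task is ``add exactly this many edges to a cluster graph and stay a cluster graph''. You also correctly guess that the paper isolates this last task as the auxiliary problem \CTA and proves it \NP-hard separately; the reduction from \CTA to \FCC in the paper is then only a few lines (blue copy of the \CTA cluster graph; $|V|$ red vertices forming a single \emph{connected} graph missing exactly $k+\lfloor|V|\delta/2\rfloor$ edges, so its completion is forced; budget $k'=2k+\lfloor|V|\delta/2\rfloor$; a short computation of $\diff$ shows that any solution must place exactly~$k$ of the remaining additions inside the blue copy).

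Where your plan diverges is in trying to bypass \CTA and go straight from \textsc{3-Partition}. The gap you flag is real, and your proposed fix does not close it. Uniform scaling by~$N$ does narrow the fairness window for $\delta>0$, but it does nothing to eliminate spurious partitions that hit the target sum of squares \emph{exactly}: if $\{T_j\}$ satisfies $\sum_j T_j^2=mB^2$ with $\sum_j T_j=mB$, then $\{NT_j\}$ satisfies $\sum_j (NT_j)^2=m(NB)^2$ as well, so the same non-\textsc{3-Partition} configurations survive scaling. Concretely, with $m=2$, $B=12$, $a_1=\dots=a_6=4$ (a \yes-instance), the partition with part sums $16,4,4$ already achieves $\sum T_j^2=288=mB^2$; nothing in your argument excludes analogous coincidences in \no-instances. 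The paper's \CTA hardness proof (from \textsc{Numerical 3D Matching}) avoids this not by uniform scaling but by \emph{differential} padding: it adds $A=n^{2d}$, $B=n^{3d}$, $C=n^{7d}$ to the three element types, so the budget analysis (\cref{lem:cta-nph-cc}--\cref{lem:cta-nph-abc}) structurally forces every final clique to contain exactly one small, one medium, and one large input clique; only then does the Cauchy--Schwarz step pin all clique sizes to~$t'$. That three-scale trick is the missing idea in your outline, and it is precisely what the \CTA abstraction packages.

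Two minor points. First, your ``star'' gadget for the forced side is fine but the paper's choice---an arbitrary connected graph on~$|V|$ vertices with a prescribed number of missing edges---is what makes the $\delta>0$ case immediate without any rescaling: one just shifts the red deficit by~$\lfloor|V|\delta/2\rfloor$. Second, the paper's backward direction also handles bicolored insertions (it does not assume they are forbidden); the $\diff$ computation shows that any feasible solution must in fact put all of the remaining~$k$ edges inside the blue copy, which is why the parenthetical ``also if only mono-colored edges are inserted'' comes for free.
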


The proof is based on a polynomial-time reduction from the following problem, which
we will later prove to be \NP-hard in \cref{sec:transform} (\cref{thm:cta-nph}).

\probDef{\CTA}
{A cluster graph~$G$ and an integer~$k \in \Nzero$.}
{Can~$G$ be transformed into another cluster graph by adding exactly~$k$ edges?}

\begin{construction}[for \cref{thm:FCC-NP-hard}]
	\label{constr:FCC-NP-hard}
	Let~$I = (G, k)$ be an instance of \CTA{} with~$n \coloneqq \abs{V(G)}$ and~$m \coloneqq \abs{E(G)}$ and assume without loss of generality that~$k \le \binom{n}{2} - m$, otherwise~$I$ is a trivial \no-instance.
	Let~$\delta \ge 0$ be an arbitrary constant.
	Choosing a sufficiently large instance~$I$, we may assume that~$\delta \le 2(m-n-1)/n$.
	We construct an instance~$I' = (G',k',\delta)$ with~$k'\coloneqq 2k+\lfloor\frac{\delta n}{2}\rfloor$ as follows.
	The graph~$G'$ contains a copy of~$G$ where every vertex is colored blue
	together with~$n$ red vertices which form an arbitrary connected graph with~$x \coloneqq \binom{n}{2}-k - \lfloor\frac{\delta n}{2}\rfloor$ edges.
	Note that this is possible as
	\[
		x \ge \ctwo{n} - \left(\ctwo{n}-m\right) - \left\lfloor \frac{ n \cdot 2(m-n-1)/n }{2} \right\rfloor \ge n-1. \qedhere
	\]
\end{construction}

Let us prove the correctness of the reduction. 

\begin{lemma}
	\label{lem:FCC-NP-hard}
	Given an instance~$I = (G, k)$ of \CTA{},
	\cref{constr:FCC-NP-hard} returns an instance~$I' = (G', k', \delta)$ of \FCC{} such that
	$I$ is a \yes-instance if and only if~$I'$ is a \yes-instance.
\end{lemma}
\begin{proof}
	Assume that~$(G,k)$ is a \yes-instance.
	Then, adding the corresponding~$k$ edges to the blue copy of~$G$ in~$G'$ and the~$k+\lfloor\frac{\delta n}{2}\rfloor$ missing edges to the red subgraph yields a cluster graph.
	This set~$S'$ of added edges satisfies
	\[
		\diff(S') = \frac{2(k + \lfloor\frac{\delta n}{2}\rfloor)}{n} - \frac{2k}{n} \le \frac{2k + \delta n - 2k}{n} = \delta.
	\]

	Conversely, let~$(G',k',\delta)$ be a \yes-instance.
	By our problem definition, our corresponding solution~$S'$ of size~$k'$ contains the~$k_r \coloneqq k + \lfloor \frac{\delta n}{2} \rfloor$ missing edges of the red subgraph of~$G'$.
	Let~$k_b$ be the number of edges between blue vertices and~$k^*$ be the number of edges between a blue and a red vertex in~$S'$ and note that~$k_b + k^* = k$.
	As we have~$k_r$ ($k_b$) edges with two red (blue) endpoints and~$k^*$ edges with one endpoint of each color,
	we have
	\begin{align*}
		\diff(S') &= \frac{2k_r+k^*}{n} - \frac{2k_b + k^*}{n} = \frac{2(k_r - k_b)}{n} = \frac{2(k+ \lfloor \frac{\delta n}{2} \rfloor - k_b)}{n} \\
			  &\ge \frac{2(k-k_b) + \delta n - 1)}{n}  = \delta + \frac{2 k^* - 1}{n}.
	\end{align*}
	As~$\diff(S') \le \delta$ we have that~$k^* \le 0$.
	So~$k_b = k$ and~$S'$ contains $k$~edges in the blue copy of~$G$ in~$G'$; thus~$(G, k)$ is a \yes-instance.
\end{proof}

As \cref{constr:FCC-NP-hard} is clearly computable in polynomial time, \cref{thm:FCC-NP-hard} follows immediately from \cref{lem:FCC-NP-hard,thm:cta-nph}.

We observe from the intractability results so far that the hardness of \FCE{} is rooted in finding the right mono-colored edge modifications.
Indeed, we can show that, if only $\mu$ mono-colored edge modifications are allowed for constant~$\mu$, then there is a randomized polynomial-time algorithm.
We will prove that this can be done by guessing the~$\mu_R$ and~$\mu_B$ modifications between red and between blue endpoints before reducing to the \prob{Budgeted Matching} problem:
Given a graph~$H$ with edge weights~$w \colon E(H) \to \QQ^+$, edge cost~$c \colon E(H) \to \QQ^+$, and weight and cost bounds~$W, C \in \QQ^+$,
the problem asks whether there is a matching~$M \subseteq E(H)$ with~$w(M) \coloneqq \sum_{e \in M} w(e) \ge W$ and~$c(M) \coloneqq \sum_{e \in M} c(e) \le C$.
Recall that an edge set~$M \subseteq E(H)$ is a \emph{matching} if no two edges in~$M$ share an endpoint.
\citet[Lem.~8]{berger2011budgeted} have shown that, if all edge weights and costs and the budget are polynomially bounded in the size of the input graph, \prob{Budgeted Matching} can be reduced in polynomial time to the \prob{Exact Perfect Matching} problem, in which, given an~$n$-vertex graph in which some edges are red and an integer~$k \in \Nzero$, the task is to decide whether there exists a matching of size~$n/2$ that contains \emph{exactly} $k$ red edges.
For \prob{Exact Perfect Matching}, there is a randomized algorithm without false positives and error probability at most a given~$\eps > 0$  with running time~$n^{\bigO(1)} \log 1/\eps$~\cite{MulmuleyVV87}. (Notably, it is unknown whether there exists a deterministic polynomial-time algorithm for the problem.)
Due to the reduction by \citet{berger2011budgeted}, \prob{Budgeted Matching} can be solved by a randomized algorithm with asymptotically the same running time bound and error probability as the one for \prob{Exact Perfect Matching}.

\begin{theorem}
	\label{thm:mu=0}
	Let~$\eps > 0$.
	Then there is a randomized algorithm without false positives and error probability at most~$\eps$
	that solves \FCE{} in~$n^{\bigO(\mu)}\log 1/\eps$ time, where~$n$ is the number of vertices and~$\mu$ is the number of allowed mono-colored modifications.
\end{theorem}

\begin{proof}
	Let~$(G, k, \delta)$ with~$V(G) = R \uplus B$ be an instance of \FCE{} and assume without loss of generality that~$\abs{R} \ge \abs{B}$.
	We first guess the numbers~$\mu_R$ and~$\mu_B$ with~$\mu_R + \mu_B \le \mu$
	and the mono-colored modification sets~$S_R \subseteq \binom{R}{2}$ and~$S_B \subseteq \binom{B}{2}$ of size~$\mu_R$ and~$\mu_B$.
	Let~$G^*$ be the graph obtained after applying the modifications in~$S_R$ and~$S_B$ to~$G$.
	Note that~$G^*[R]$ and~$G^*[B]$ must be cluster graphs as we can only do bi-colored edge modifications from here on.
	Now, for any hypothetical bi-colored edge modification set~$S' \subseteq \binom{V(G)}{2} \setminus (\binom{R}{2} \cup \binom{B}{2})$, we require
	\[
		\diff(S_R \cup S_B \cup S') = \left\lvert \frac{2\mu_R + \abs{S'}}{\abs{R}} - \frac{2\mu_B + \abs{S'}}{\abs{B}} \right\rvert \le \delta,
	\]
	which is equivalent to requiring
	\[
		-\delta \le \frac{2\mu_R + \abs{S'}}{\abs{R}} - \frac{2\mu_B + \abs{S'}}{\abs{B}} \le \delta.
	\]
	If~$\abs{R} = \abs{B}$, then adding bi-colored edges will have no effect on~$\diff(S_R \cup S_B \cup S')$; thus we assume that~$\abs{R} > \abs{B}$.
	Then, adding bi-colored edges will increase the average number of edits incident to~$B$ more than those incident to~$R$.
	The above inequalities yield the following lower and upper bound on~$\abs{S'}$:
	\begin{align*}
		\alpha' \coloneqq
		\frac{-\delta - (2\mu_R/\abs{R} - 2\mu_B/\abs{B})}{1/\abs{R} - 1/\abs{B}}
		\le \abs{S'} \le
		\frac{\delta - (2\mu_R/\abs{R} - 2\mu_B/\abs{B})}{1/\abs{R} - 1/\abs{B}} \eqqcolon \beta'.
	\end{align*}
	Note that~$\alpha'$ and~$\beta'$ may be negative and larger than~$k-\mu$;
	thus we may look for a bi-colored edge modification set of size at least $\alpha \coloneqq \max\{0, \alpha'\}$ and at most~$\beta \coloneqq \min\{k-\mu, \beta'\}$.

	Let~$R_1,\ldots,R_r$ and $B_1, \dots, B_b$ be the vertex sets of the clusters in~$G^*[R]$ and~$G^*[B]$, respectively.
	Since~$S'$ shall only contain bi-colored edges,
	a solution can never merge two blue or two red clusters into one.
	Thus, any solution either isolates a cluster, or merges it with exactly one cluster of the other color.
	This can be modeled as a matching in a complete bipartite graph~$H$ with vertices~$u_1,\ldots,u_r$ on one side and~$v_1,\ldots,v_b$ on the other side, where a matching edge indicates which clusters are merged.
	Clearly, every cluster editing solution for~$G^*$ with only bi-colored edits corresponds to a matching and vice versa.
	Let~$E'\subseteq E(G)$ be the edges between~$R$ and~$B$ and let~$E_{ij}\subseteq E'$ denote the edges between $R_i$ and~$B_j$.
	For a given matching~$M$ in~$H$, a solution
	must remove all edges in~$E'$ except for those in~$E_{ij}$ corresponding to a matching edge~$\{u_i, v_j\} \in M$.
	Further, for every matching edge~$\{u_i, v_j\}$, we must add all~$\abs{R_i}\abs{B_j} - \abs{E_{ij}}$ missing edges.
	Hence, the size of a bi-colored modification set~$S'$ corresponding to~$M$ is
	\[|E'|-\;\sum_{\mathclap{\{u_i,v_j\}\in M}}\;|E_{ij}|+ \;\sum_{\mathclap{\{u_i,v_j\}\in M}}\;(|R_i||B_j|-|E_{ij}|) = \abs{E'} - \;\sum_{\mathclap{\{u_i, v_j\} \in M}} (2\abs{E_{ij}} - \abs{R_i}\abs{B_j}). \]

	Define~$w \colon E(H) \to \QQ^+$ with~$w(\{u_i, v_j\}) \coloneqq (2\abs{E_{ij}} - \abs{R_i}\abs{B_j})$.
	Then, a matching~$M$ of weight~$\abs{E'} - \beta \le w(M) \le \abs{E'} - \alpha$ corresponds to a bi-colored modification set~$S'$ such that~$S_R \cup S_B \cup S'$ transforms~$G$ into a cluster graph and~$\diff(S_R \cup S_B \cup S') \le \delta$.
	To this end, we solve an instance for \prob{Budgeted Matching} with cost function~$c \equiv w$ and budgets~$W \coloneqq \abs{E'} - \beta$ and~$C \coloneqq \abs{E'} - \alpha$
	using the reduction~\cite{berger2011budgeted} and randomized algorithm~\cite{MulmuleyVV87} mentioned above.
	The algorithm returns the desired matching~$M$ with probability at least~$1-\eps$ if it exists, and reports \no{} otherwise.
	In the former case, we return the modification set~$S_R \cup S_B \cup S'$ and thus correctly report \yes{} with probability at least~$1-\eps$.
	If the algorithm reports \no{} for every possible guessed mono-colored edge modification set~$S_R \cup S_B$, then we report that there is no modification-fair modification set of size at most~$k$.
	Let~$n \coloneqq \abs{V(G)}$.
	As there are~$\binom{n}{2}^\mu \le n^{2\mu}$ guesses, for each of which we solve an instance of \prob{Budgeted Matching} in~$\abs{V(H)}^{\bigO(1)}\log1/\eps \subseteq n^{\bigO(1)}\log 1/\eps$ time, the running time follows.
\end{proof}

We leave open whether or not \FCE{} is fixed-parameter tractable when parameterized by the number~$\mu$ of mono-colored edge modifications.
However, for the larger parameter~$k$, the number of edge modifications, we are able to prove fixed-parameter tractability --- we will prove this next.
Our approach is as follows.
We first run the well-known 
$P_3$-branching algorithm~\cite{Cai96} to enumerate cluster graphs.
As the resulting solution need not be modification-fair, we may need to do further edge modifications.
For this, we first apply polynomial-time data reduction rules which shrink the graph size to a polynomial in~$k$, and then brute-force on the reduced graph.

\begin{theorem}\label{thm:fpt}
	\FCE{} can be solved in $2^{\bigO(k \log k)} \cdot (n+m)$ time on $n$-vertex, $m$-edge graphs.
\end{theorem}

\begin{proof}
	Let~$(G, k, \delta)$ be an instance of \FCE{} with~$V(G) = R \uplus B$.
	We first apply the standard $P_3$-branching algorithm for \CE{} to enumerate all minimal cluster edge modification sets~$S$ of size at most~$k$ in~$\bigO(3^k (n+m))$ time~\cite{Cai96}.
	For each~$S$, we check whether $\diff(S)\le \delta$.
        If not, then we try to extend~$S$ to a fair edge modification set.
	Clearly, each fair edge modification set of size at most~$k$ contains at least one of the enumerated edge modification sets.
        Note that in order to check later that our modification set is fair, we store the original numbers~$|B|$ and $|R|$ of blue and red vertices in~$G$.
        
        For each~$S$, we first apply the following three data reduction rules to the cluster graph~$G'$ obtained from~$S$. 
	\begin{enumerate}
		\item If there is a clique with more than~$k+1$ vertices, then delete it.
		\item If there are more than~$2k$ isolated vertices of the same color which have not been touched by~$S$, then delete one of them.
		\item Let~$2 \le s \le k+1$ and~$0\le t \le s$.
			If there are more than~$k$ cliques with~$s$ vertices, $t$ of which are blue, and none of them are touched by~$S$, then delete one of them.
	\end{enumerate}
	Note that we keep all cliques with at most~$k+1$ in~$G'$ which contain an endpoint of an edge in~$S$. Clearly, there are at most~$2|S|$ such cliques.

	For the correctness, note that modifying a clique with~$\ell\ge 2$ vertices requires at least~$\ell-1$ edge modifications.
	Hence, Rule~1 is correct.
	Clearly, $k$ edge modifications can touch at most~$2k$ vertices of any color, so Rule~2 is correct.
	Rule~3 is correct as we cannot touch more than~$k$ cliques of size at least two.

	For exhaustive application of the data reduction rules, we count the number of cliques with the same numbers of blue and red vertices.
	As we added at most~$k$ edges to obtain~$G'$, we can apply the rules in~$\bigO(n+m+k)$ time.
	After exhaustive application, the remaining graph contains~$\bigO(k^2)$ vertices contained in cliques touched by~$S$ and~$\bigO(k^3)$ vertices not touched by~$S$.

	Let~$W\subseteq V(G)$ be the vertices remaining after exhaustive application of the above data reduction rules.
	We now try all possible extensions~$S' \subseteq \binom{W}{2}\setminus S$ of size at most~$k-|S|$ and check whether the set~$S^* \coloneqq S \cup S'$ transforms~$G$ into a cluster graph and is fair, that is, $\diff(S^*) \le \delta$.
	There are~$\bigO(k^{6k})$ such extensions; the checking can be done in~$\bigO(m+n+k)$ time each.
	The overall running time thus is~$2^{\bigO(k\log k)}\cdot(m+n)$.
\end{proof}

Seeing this approach, one may ask why it cannot be adapted to prove fixed-parameter tractability for the number~$\mu$ of mono-colored edge edits.
Of course, we can use the standard branching algorithm to enumerate all minimal solutions for~$G[R]$ and~$G[B]$ in~$\bigO(3^\mu (n+m))$ time.
However, we cannot apply the three data reduction rules, as we can differentiate between the clusters in~$G[R]$ and~$G[B]$ due to their incident bi-colored edges.
Hence, it is not clear which clusters we can safely discard.

\section{Transforming Cluster Graphs}\label{sec:transform}
\appendixsection{sec:transform}

This section is devoted to proving the NP-hardness of the above introduced \CTA{}.
Recall that in this problem we are given a cluster graph~$G$ and an integer~$k \in \Nzero$,
and we are asked to decide whether $G$ can be transformed into \emph{another} cluster graph by adding exactly~$k$ edges.

\begin{theorem}
	\label{thm:cta-nph}
	\CTA{} is \NP-hard.
\end{theorem}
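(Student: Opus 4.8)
The plan is to reduce from the (strongly) NP-hard \textsc{3-Partition} problem: given $3n$ positive integers $a_1,\dots,a_{3n}$ with $\sum_i a_i = nB$ and $B/4 < a_i < B/2$ for all $i$, decide whether $\{1,\dots,3n\}$ can be partitioned into $n$ triples each summing to exactly $B$.

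The first step is a structural reformulation of \CTA. If the input cluster graph has cliques of sizes $n_1,\dots,n_m$, then a set $S$ of added edges turns it into a cluster graph iff $S$ realizes a \emph{merge} of the cliques according to some partition $\mathcal P$ of $\{1,\dots,m\}$ (within each part all missing edges are added, across parts none); in that case $|S| = \sum_{P\in\mathcal P}\binom{s_P}{2} - \sum_i\binom{n_i}{2}$ with $s_P = \sum_{i\in P} n_i$. Since $\sum_P s_P = \sum_i n_i =: N$ is fixed, the condition ``$|S| = k$'' is equivalent to ``$\sum_P s_P^2$ hits a fixed target''. Thus \CTA is exactly the question: given a multiset of positive integers, is there a partition whose parts achieve a prescribed sum of squared part-sums — a problem of a distinctly load-balancing / scheduling flavor.

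Given a \textsc{3-Partition} instance, I would construct the cluster graph $G$ consisting of $3n$ ``small'' cliques of sizes $a_1,\dots,a_{3n}$ together with $n$ ``anchor'' cliques each of size $D$, where $D = \poly(n,B)$ is chosen sufficiently large (something like $D = n^2B^2$ works); strong NP-hardness of \textsc{3-Partition} keeps $D$, and hence the whole construction, polynomially sized. I would set $k$ so that the target value for $\sum_P s_P^2$ is $n(D+B)^2$; concretely $k = n\bigl(DB + \binom{B}{2}\bigr) - \sum_i\binom{a_i}{2}$, which one checks is a nonnegative integer. The easy direction: a \textsc{3-Partition} solution yields the partition of $G$'s cliques into $n$ groups, each an anchor plus three small cliques summing to $B$; each group becomes a clique of size $D+B$ and the number of added edges is exactly $k$.

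The converse is the technical core. Take any feasible partition $\mathcal P$, and for each part $P$ let $b_P$ be its number of anchors and $m_P$ its total small-clique size. Expanding $\sum_P s_P^2 = D^2\sum_P b_P^2 + 2D\sum_P b_P m_P + \sum_P m_P^2$ and comparing with $n(D+B)^2 = nD^2 + 2nDB + nB^2$, one notes that $\sum_P b_P^2$, $\sum_P b_P m_P$, and $\sum_P m_P^2$ are all bounded by a polynomial in $n$ and $B$; so choosing $D$ large forces in turn: (i) $\sum_P b_P^2 = n$, hence (using $b^2\ge b$ for integers $b\ge 0$) every $b_P\in\{0,1\}$ and exactly $n$ parts contain an anchor; (ii) the anchor-free parts carry no small mass — otherwise, with $E$ denoting their total small mass, $\sum_P m_P^2 = nB^2 + 2DE$ would exceed $(\sum_P m_P)^2 = n^2B^2$ — so there are exactly $n$ parts, each with one anchor. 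Then $\sum_P m_P^2 = nB^2$ with $\sum_P m_P = nB$ over $n$ parts, which by convexity forces $m_P = B$ for every $P$, and $B/4 < a_i < B/2$ forces each such part to contain exactly three small cliques; this is a \textsc{3-Partition} solution. I expect steps (i)–(ii) — fixing precisely how large $D$ must be and ruling out spurious anchor-free or anchor-heavy parts — to be the main obstacle; the rest is the routine translation through the reformulation above, and the same construction also yields strong NP-hardness (and an ETH-based lower bound via a linear reduction from \textsc{3-Partition}).
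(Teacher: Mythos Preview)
Your argument is correct and complete as sketched; the structural reformulation (a feasible $S$ is exactly a merge pattern, so the question becomes hitting a prescribed value of $\sum_P s_P^2$) is the right lens, and your two-step ``coefficient matching'' in the $D$-expansion followed by Cauchy--Schwarz cleanly forces the intended partition.

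The paper takes a genuinely different route: it reduces from \textsc{Numerical 3D Matching} rather than \textsc{3-Partition}, building \emph{three} size classes of cliques (small $a_i+A$, medium $b_i+B$, large $c_i+C$ with $A\ll B\ll C$) instead of your two (small $a_i$ plus anchors of size $D$). Its converse direction is a chain of structural lemmas---first no two large cliques merge, then every medium clique must merge with some large one, then every small clique must as well, hence exactly $n$ clusters, hence (via Cauchy--Schwarz, as in your argument) all of size $t'$, and finally the three well-separated scales force one clique of each type per cluster. Your single-constant expansion collapses this chain into two inequalities, and the $B/4<a_i<B/2$ promise of \textsc{3-Partition} replaces the paper's three-scale gadget in forcing exactly three items per bin; the upshot is a noticeably shorter and more elementary proof. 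The paper's multi-scale construction, on the other hand, encodes ``exactly one item from each of three labeled groups per bin'' without relying on a size promise, which may be a more reusable template when such a promise is absent.

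One small caveat on your closing parenthetical: the reduction is not linear in size---your instance has $\Theta(nD)=\Theta(n^3B^2)$ vertices---so you do not get a tight ETH lower bound this way, only a weak one. Strong NP-hardness, however, follows as you claim.
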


We devise a polynomial-time reduction from the \textsc{Numerical 3D Matching} problem introduced and proven to be strongly \NP-hard by \citet{GJ75}.
Herein, given positive integers~$t$, $a_1, \dots, a_n$, $b_1, \dots, b_n$, and $c_1, \dots, c_n$, one is asked whether there are bijections
$\alpha, \beta, \gamma \colon \oneto{n} \to \oneto{n}$ such that $a_{\alpha(i)} + b_{\beta(i)} + c_{\gamma(i)}=t$ holds for each~$i \in \oneto{n}$.

On a high level, our reduction works as follows.
We add a \emph{small} clique for every~$a_i$, a \emph{medium-sized} clique for every~$b_i$, and a \emph{large} clique for every~$c_i$.
Throughout this section, we will refer to the number of vertices in a clique as its \emph{size}.
By appropriate choice of our solution size~$k$, we can ensure that every clique in the resulting cluster graph --- our so-called \emph{solution graph}~$G'$ with vertex set~$V(G') = V(G)$ and edge set~$E(G') = E(G) \cup S$ --- is the result of merging one small, one medium, and one large clique.
We finally show that if each cluster consists of cliques corresponding to elements~$a_i$, $b_j$, and~$c_\ell$ such that their sum is equal to the target~$t$, then the number of required edge additions is minimized.
That is, if there is a cluster that does not hit this target, then the resulting solution adds more than~$k$ edges.

\begin{construction}[for \cref{thm:cta-nph}]
	\label{constr:cta-nph}
	Let~$I=(t, a_1, \dots, a_n, b_1, \dots, b_n,\allowbreak c_1, \dots, c_n)$, $n\ge 3$, be an instance of \textsc{Numerical 3D Matching}.
	As \textsc{Numerical 3D Matching} is strongly \NP-hard, we may assume that for all~$i \in \oneto{n}$, $a_i, b_i, c_i \le n^d$ for some constant~$d > 0$.
	We further assume that~$t > a_i, b_i, c_i$ for all~$i \in \oneto{n}$ and that~$\sum_{i=1}^n (a_i + b_i + c_i) = n \cdot t$, as otherwise $I$ is a trivial \no-instance.

	We construct an instance~$I'=(G, k)$ of \CTA{} as follows.
	Let~$A \coloneqq n^{2d}$, let~$B \coloneqq n^{3d}$, and let~$C \coloneqq n^{7d}$.
	For~$i \in \oneto{n}$, we set~$a'_i \coloneqq a_i+A$, $b'_i \coloneqq b_i+B$, $c'_i \coloneqq c_i+C$, and add three cliques of size~$a'_i$, $b'_i$, and~$c'_i$, respectively, to~$G$.
	We refer to these cliques by their size~$a'_i$, $b'_i$, $c'_i$ and call them \emph{small}, \emph{medium-sized}, and \emph{large}, respectively.
	For more convenient notation, let~$t' \coloneqq t + A + B + C$.
	Finally, set
        \[k \coloneqq n\smallbinom{t'}{2}-|E(G)|= n\smallbinom{t'}{2}- \sum_{i=1}^n\Big(\smallbinom{a'_i}{2}+\smallbinom{b'_i}{2}+\smallbinom{c'_i}{2}\Big). \qedhere\]
\end{construction}

Proving the forward direction of our reduction is straightforward.

\begin{lemma}
	\label{obs:cta-forward}
	If \cref{constr:cta-nph} is given a \yes-instance~$I$ of \prob{Numerical 3D Matching},
	then it returns a \yes-instance~$I'$ of \CTA{}.
\end{lemma}

\begin{proof}
	Let~$\alpha, \beta, \gamma$ be a solution for instance~$I$.
	Creating~$n$ clusters by merging the cliques~$a'_{\alpha(i)}$, $b'_{\beta(i)}$, $c'_{\gamma(i)}$ for each~$i \in \oneto{n}$ yields a solution graph~$G'$ with
	\begin{align*}
		|E(G')| &= \sum_{i=1}^n \ctwo{a'_{\alpha(i)} + b'_{\beta(i)} + c'_{\gamma(i)}} = \sum_{i=1}^n \ctwo{t+A+B+C} = \sum_{i=1}^n \ctwo{t'}
	\end{align*}
	edges, created by adding~$|E(G')|-|E(G)| = k$ edges.
\end{proof}

The backward direction is more involved.
In the following, let~$I' = (G, k)$ be an instance of \CTA{} obtained from applying \cref{constr:cta-nph} on an instance~$I$ of \prob{Numerical 3D Matching}.
We will first provide a lower and an upper bound on~$k$.
Then, step by step, we will prove that every solution of our constructed instance~$I'$ transforms our graph into a cluster graph with~$n$ cliques, each containing exactly one small, one medium-sized, and one large clique.

\begin{lemma}
	\label{obs:cta-nph-k}
	In the constructed instance~$I'$ we have
	$n(AC+BC) \le k \le 2nBC$.
\end{lemma}
\begin{proof}
	It is easy to verify that for~$x_1, \dots, x_n \in \NN$,
	\begin{align*}
		\binom{\sum_{i=1}^n x_i}{2} &= \frac{1}{2} \bigg( \Big( \sum_{i=1}^n x_i\Big)^2 - \sum_{i=1}^n x_i \bigg) = \frac{1}{2} \bigg( \sum_{i=1}^n x_i^2 + 2 \sum_{\mathclap{i<j \in \oneto{n}}} x_i x_j - \sum_{i=1}^n x_i \bigg)\\
					    &= \sum_{i=1}^n \ctwo{x_i} + \sum_{i=1}^n\sum_{j=i+1}^n x_ix_j.
	\end{align*}
	Hence, we can reformulate~$k$ as follows:
	\begin{align*}
		k &= \sum_{i=1}^n \Big(\ctwo{t'} - \ctwo{a'_i} - \ctwo{b'_i} - \ctwo{c'_i}\Big)\\
		  &= \sum_{i=1}^n \Big(\ctwo{t} + \ctwo{A} + \ctwo{B} + \ctwo{C} + AB + BC + AC + t(A + B + C)\\
		  &\phantom{=} -\ctwo{A} - \ctwo{a_i} - A a_i - \ctwo{B} - \ctwo{b_i} - B b_i - \ctwo{C} - \ctwo{c_i} - C c_i\Big) \\
		  &= \sum_{i=1}^n \Big(\ctwo{t} - \ctwo{a_i} - \ctwo{b_i} - \ctwo{c_i} + AB + BC + AC\\
		  &\phantom{=}+ (t-a_i)A + (t-b_i)B + (t-c_i)C\Big).
		  \addtocounter{equation}{1}\tag{\theequation}\label{eq:cta-nph-k}
	\end{align*}
	As~$1 \le a_i, b_i, c_i \le n^d$, we have~$\binom{a_i}{2} + \binom{b_i}{2} + \binom{c_i}{2} \le 3n^{2d}\le n^{5d} = AB$.
	Thus, $k \ge n (AC+BC)$.
	For the upper bound, observe that~$t \le 3n^d$.
	Plugging this into \eqref{eq:cta-nph-k}, we have
	\begin{align*}
		k &\le \sum_{i=1}^n \Big(\ctwo{3n^d} + AB + BC + AC + 3n^d(A+B+C)\Big)\\
		     &\le nBC + n\big(9n^{2d} + n^{5d} + n^{9d} + 3n^{3d} + 3n^{4d} + 3n^{8d}\big).
	\end{align*}
	As the right summand is a polynomial in~$n$ with degree~$9d+1$ and coefficients in~$\bigO(1)$, it is at most~$nBC$ for sufficiently large~$n$.
	Thus, the upper bound follows.
\end{proof}

\cref{obs:cta-nph-k} implies that we cannot merge two large cliques.
\begin{lemma}
	\label{lem:cta-nph-cc}
	If~$I'$ is a \yes-instance, then no solution merges two large cliques.
\end{lemma}

\begin{proof}
	Merging two large cliques adds at least~$C^2 = n^{14d} > 2n^{10d+1} = 2nBC > k$ edges, see \cref{obs:cta-nph-k}.
\end{proof}

At the same time, we cannot reach our budget unless we merge every medium-sized clique with a large one.

\begin{lemma}
	\label{lem:cta-nph-bc}
	If~$I'$ is a \yes-instance, then in any solution graph, for every~$i\in\oneto{n}$, the clique~$b'_i$ is merged with exactly one clique~$c'_{\varphi(i)}$, where~$\varphi(i) \in \oneto{n}$.
\end{lemma}
\begin{proof}
	By \cref{lem:cta-nph-cc}, we can merge any small clique and any medium-sized clique with at most one large clique.
	Note that every medium-sized clique that is merged with a large clique contributes more than $BC$ edge additions to our budget.
	Assume towards a contradiction that
        there is one medium-sized clique that is not merged with a large clique.
	Then, the maximum number of edge additions is achieved by
	merging the other~$n-1$ medium-sized cliques and all~$n$ small cliques with the largest clique, leaving the one remaining medium-sized clique and~$n-1$ large cliques untouched.
	The number of edges added by these merges is at most
	\begin{align*}
		&(n-1)(B+n^d)(C+n^d) + \ctwo{n-1}(B+n^d)^2\\
		&+ n(n-1)(A+n^d)(B+n^d) + n(A+n^d)(C+n^d)+ \ctwo{n}(A+n^d)^2\\
		=&~ (n-1)BC + R,
	\end{align*}
	where~$R$ is a polynomial in~$n$ with degree at most~$9d+1$ (due to the summand $nAC = n^{9d+1}$) and coefficients in~$\bigO(1)$.
	Thus, for sufficiently large~$n$, we have~$R < n^{10d} = BC$, and the overall number of added edges is less than~$nBC$, which by \cref{obs:cta-nph-k} is a contradiction.
\end{proof}

Now we know that a significant part of the budget is spent on merging medium-sized cliques with large cliques.
Nevertheless, we cannot meet our budget unless we spend the remaining budget on merging small cliques with a large clique as well.

\begin{lemma}
	\label{lem:cta-nph-ac}
	If~$I'$ is a \yes-instance, then in any solution graph, for every~$i\in\oneto{n}$, the clique~$a'_i$ is merged with exactly one clique~$c'_{\chi(i)}$, where~$\chi(i) \in \oneto{n}$.
\end{lemma}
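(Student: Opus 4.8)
The plan is to argue exactly as in the proof of \cref{lem:cta-nph-bc}, but now using the structural information we have already gained: by \cref{lem:cta-nph-cc} no two large cliques merge, and by \cref{lem:cta-nph-bc} every medium-sized clique $b'_i$ is merged with exactly one large clique $c'_{\varphi(i)}$, so $\varphi$ is a bijection on $\oneto{n}$ and each large clique already absorbs precisely one medium-sized clique. What remains to be shown is that each small clique $a'_i$ must also be merged into (exactly one) large clique. Since by \cref{lem:cta-nph-cc} a small clique can merge with at most one large clique, and since small cliques cannot profitably merge with each other (their union would still need to join a large clique by the same counting), it suffices to rule out that any small clique is left unmerged with a large clique.

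First I would assume towards a contradiction that some small clique, say $a'_j$, is not merged with any large clique. Then the edge additions in the solution graph decompose into: (a) the $n$ mandatory medium-large merges forced by \cref{lem:cta-nph-bc}, (b) at most $n-1$ small-large merges, and (c) whatever is done with $a'_j$ (at best merging it with other small cliques, which adds comparatively few edges). I would bound the total from above by the most generous configuration: merge each $b'_i$ with the \emph{largest} possible $c$-clique it could be paired with, merge the $n-1$ available small cliques also into large cliques so as to maximize pairwise products, and merge $a'_j$ with all other small cliques. Using $a_i,b_i,c_i \le n^d$ and $t \le 3n^d$, every such term is polynomially bounded, and the dominant term is $n \cdot BC = n^{10d+1}$ coming from the forced medium-large merges; the ``missing'' small-large merge costs us at least one term of order $AC = n^{9d}$ (more precisely, the edges between $a'_j$ and some large clique that are now \emph{not} added). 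So the achievable total is strictly below $k$, contradicting that $I'$ is a \yes-instance. This is the same template as \cref{lem:cta-nph-bc}: write out the at-most sum of binomial/product terms, expand, and observe it is $< n^{10d+1} \le k$ by \cref{obs:cta-nph-k}.

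The only mild subtlety — and the part I expect to need the most care — is correctly accounting for the interaction between clauses (a) and (b): once \cref{lem:cta-nph-bc} fixes which medium clique sits in which large clique, the ``room'' left in each large clique for small cliques is already constrained, so the naive upper bound must respect that each large clique receives exactly one medium clique. But since we are only proving an upper bound, we may freely overcount by allowing the most favorable assignment of small cliques to large cliques, and the arithmetic still closes with room to spare (the gap between the true optimum and $k$ is of order $n^{9d+1}$, far larger than the lower-order slack we throw away). Hence the contradiction goes through, $\chi$ is well-defined and, being injective into an $n$-element set that is exhausted, a bijection, completing the proof.
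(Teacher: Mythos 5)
Your overall strategy mirrors the paper's, but there is a genuine gap at the decisive step: you propose to close the argument with the same final comparison as in \cref{lem:cta-nph-bc}, namely that the total number of achievable edge additions is ``$< n^{10d+1} \le k$''. That comparison cannot work here. By \cref{lem:cta-nph-bc}, the forced medium--large merges alone already contribute $\sum_{i=1}^n b'_i c'_{\varphi(i)} \ge n\cdot BC = n^{10d+1}$ edge additions, so the total in any of the configurations you consider is never below $n^{10d+1}$; and the only information \cref{obs:cta-nph-k} provides about $k$ is the window $n^{10d+1} \le k \le 2n^{10d+1}$, which is far too coarse to detect the loss of a single small--large term of order $AC = n^{9d}$. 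The missing ingredient is the computation the paper performs before the counting: starting from the exact expression \eqref{eq:cta-nph-k} for $k$ and using the assumption $\sum_{i}(a_i+b_i+c_i)=nt$, one subtracts the forced cost $\sum_i b'_ic'_{\varphi(i)}$ and shows that the \emph{residual} budget satisfies $k' = k - \sum_i b'_ic'_{\varphi(i)} \ge n\cdot AC = n^{9d+1}$ (the $BC$ terms cancel, the leftover linear sum is nonnegative because $t$ is large enough, and the $n\cdot AC$ term survives). Only against this lower bound on $k'$ does your counting close: the non-forced additions available when one small clique is stranded are dominated by $(n-1)(A+n^d)(C+n^d)$ plus lower-order small--small, small--medium and medium--medium terms, and this falls strictly below $n^{9d+1}$. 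Your closing remark that ``the gap between the true optimum and $k$ is of order $n^{9d+1}$'' is precisely the statement $k' \ge n^{9d+1}$ that needs proof; it does not follow from \cref{obs:cta-nph-k}, and without it the contradiction does not materialize.

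A minor further point: the lemma does not assert that $\chi$ is a bijection, and injectivity is not available at this stage (a priori two small cliques could join the same large clique; this is only excluded later via \cref{lem:cta-nph-t,lem:cta-nph-abc}). Your final sentence therefore claims more than the argument here supports, though nothing in the lemma requires it.
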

\begin{proof}
	By \cref{lem:cta-nph-cc} and \cref{lem:cta-nph-bc},
	we can merge merge any small clique with at most one medium-sized clique and at most one large clique.
	Assume towards a contradiction that there is one small clique that is not merged with a large clique.
	Then, the maximum number of edge additions is achieved by
	merging~$n-1$ small cliques
	with all~$n$ medium-sized cliques
	and one large clique,
	and leaving the remaining small clique and $n-1$ large cliques untouched.
	The number of edge additions provided by this is at most
	\begin{align*}
		&\ctwo{n-1}(A+n^d)^2
		+ n(n-1)(A+n^d)(B+n^d)\\
		&+ (n-1)(A+n^d)(C+n^d)
		+ \ctwo{n}(B+n^d)^2
		+ n(B+n^d)(C+n^d)\\
		={}& nBC + (n-1)AC + R,
	\end{align*}
	where~$R$ is a polynomial in~$n$ with degree at most~$8d+1$ (due to the summand $n\cdot C \cdot n^d = n^{8d+1}$) and coefficients in~$\bigO(1)$.
	Thus, for sufficiently large~$n$, we have~$R < n^{9d} = AC$, and the overall number of added edges is less than~$nBC+nAC$, which by \cref{obs:cta-nph-k} is a contradiction.
\end{proof}

Combining \cref{lem:cta-nph-cc,lem:cta-nph-bc,lem:cta-nph-ac}, we obtain the following.

\begin{lemma}
	\label{lem:cta-nph-n}
	If~$I'$ is a \yes-instance, then every solution graph consists of exactly~$n$ cliques.
\end{lemma}

\begin{proof}
	By \cref{lem:cta-nph-cc}, no two large cliques can be merged, that is, the solution graph contains at least~$n$ cliques.
	By \cref{lem:cta-nph-bc,lem:cta-nph-ac}, every medium-sized clique and every small clique must be merged with exactly one large clique, which implies that the solution graph contains at most~$n$ cliques.
\end{proof}

With \cref{lem:cta-nph-n} at hand, we can show that the budget~$k$ is exactly met if and only if each resulting clique contains $t'$~vertices.

\begin{lemma}
	\label{lem:cta-nph-t}
	If~$I'$ is a \yes-instance, then every clique in a solution graph~$G'$ contains~$t'$ vertices.
\end{lemma}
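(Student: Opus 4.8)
The plan is to leverage \cref{lem:cta-nph-n}: we already know that any solution graph $G'$ consists of exactly $n$ cliques, each obtained by merging exactly one small clique $a'_{\chi^{-1}(j)}$, one medium-sized clique $b'_{\varphi^{-1}(j)}$, and one large clique $c'_j$ (using the functions $\chi$ and $\varphi$ from \cref{lem:cta-nph-ac,lem:cta-nph-bc}). So the $j$-th resulting clique has size $s_j \coloneqq a'_{\chi^{-1}(j)} + b'_{\varphi^{-1}(j)} + c'_j$, and by construction $\sum_{j=1}^n s_j = \sum_i(a'_i+b'_i+c'_i) = n t'$ (here I would use the assumption $\sum_i(a_i+b_i+c_i) = nt$ made in the construction). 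The number of edges added to form the $j$-th clique is exactly $\binom{s_j}{2} - \binom{a'_{\chi^{-1}(j)}}{2} - \binom{b'_{\varphi^{-1}(j)}}{2} - \binom{c'_j}{2}$, so summing over $j$, the total number of edges added is $\sum_{j=1}^n \binom{s_j}{2} - \sum_{i=1}^n\big(\binom{a'_i}{2}+\binom{b'_i}{2}+\binom{c'_i}{2}\big)$.

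The key observation is that $\sum_{j=1}^n \binom{s_j}{2}$ is a convex function of the vector $(s_1,\dots,s_n)$ subject to the fixed-sum constraint $\sum_j s_j = nt'$, so it is minimized exactly when all $s_j$ are equal, i.e.\ when $s_j = t'$ for every $j$. Concretely, $\binom{x}{2} = \tfrac{x^2-x}{2}$, so $\sum_j\binom{s_j}{2} = \tfrac12\sum_j s_j^2 - \tfrac12 nt'$, and by the standard identity $\sum_j s_j^2 = n t'^2 + \sum_j (s_j - t')^2$ whenever $\sum_j s_j = nt'$. Hence the total number of added edges equals
\[
	n\mybinom{t'}{2} - \sum_{i=1}^n\Big(\mybinom{a'_i}{2}+\mybinom{b'_i}{2}+\mybinom{c'_i}{2}\Big) + \tfrac12\sum_{j=1}^n (s_j - t')^2 = k + \tfrac12\sum_{j=1}^n (s_j-t')^2,
\]
using the definition of $k$. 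Since $I'$ is a \yes-instance, the solution uses at most $k$ edge additions; therefore $\tfrac12\sum_j (s_j - t')^2 \le 0$, which forces $s_j = t'$ for all $j \in \oneto{n}$.

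I would then conclude that every clique in $G'$ has size exactly $t'$. The main thing to be careful about is justifying that the matching structure is a genuine bijection — that $\chi$ and $\varphi$ are bijections $\oneto{n}\to\oneto{n}$ — which follows from \cref{lem:cta-nph-n} together with \cref{lem:cta-nph-bc,lem:cta-nph-ac} (each large clique receives at least one medium and at least one small clique since otherwise fewer than $n$ or more than $n$ cliques result). I do not expect any real obstacle here; the only subtlety is the bookkeeping of the fixed-sum constraint $\sum_j s_j = nt'$, which relies squarely on the construction-time normalization $\sum_i(a_i+b_i+c_i)=nt$, and the elementary convexity fact that $\sum_j(s_j - t')^2 \ge 0$ with equality iff each $s_j = t'$.
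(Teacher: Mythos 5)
Your proof is correct and takes essentially the same route as the paper: both use \cref{lem:cta-nph-n} to fix exactly $n$ cliques with $\sum_j s_j = nt'$ and then observe that the number of added edges is $k+\tfrac12\sum_j(s_j-t')^2$, forcing all $s_j=t'$ (the paper phrases this last step via the Cauchy--Schwarz inequality rather than the variance identity, which is an equivalent elementary fact). One small quibble: your side-remark that $\chi$ and $\varphi$ must be bijections is neither correctly justified (a large clique absorbing two medium cliques while another absorbs none would still leave exactly $n$ cliques) nor needed, since $\sum_j s_j = nt'$ is simply the total vertex count and the edge-count bookkeeping telescopes regardless of how the small and medium cliques are distributed.
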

\begin{proof}
	By \cref{lem:cta-nph-n}, $G'$ consists of $n$ cliques.
	Let their sizes be~$s_1, s_2, \dots, s_n$.
	Then their sum is~$\sum_{i=1}^n s_i = nt'$; otherwise the \textsc{Numerical 3D Matching} instance~$I$ is a \no-instance.
	Next, note that
	\begin{align}
		\label{eq:n}
		\abs{E(G')} = \sum_{i=1}^n \ctwo{s_i} &= \frac{1}{2}\sum_{i=1}^n s_i^2 - \frac{1}{2}\sum_{i=1}^n s_i = \frac{1}{2}\sum_{i=1}^n s_i^2 - \frac{1}{2} nt'.
	\end{align}
	As~$\abs{E(G')} = \abs{E(G)} + k = n\binom{t'}{2} = \frac{1}{2} (n{t'^2} - nt')$,
	we have~$\sum_{i=1}^n s_i^2 = n{t'}^2$.
	By the Cauchy-Schwarz inequality, we have
	\begin{align}
		\label{eq:cs}
		n \cdot \sum_{i=1}^n s_i^2 &= \bigg(\sum_{i=1}^n 1^2\bigg)\bigg(\sum_{i=1}^n s_i^2\bigg) \ge \bigg(\sum_{i=1}^n 1 \cdot s_i\bigg)^2 = (nt')^2 = n \cdot n{t'}^2,
	\end{align}
        that is, we have $\sum_{i=1}^ns_i^2 \ge n{t'}^2$, and the two sides are equal only if~$s_1 = s_2 = \dots = s_n = t'$.
\end{proof}

This allows us to prove the desired property of any solution for $I'$.

\begin{lemma}
	\label{lem:cta-nph-abc}
	If~$I'$ is a \yes-instance, then every clique in a solution graph~$G'$ consists of a small, a medium-sized, and a large clique.
\end{lemma}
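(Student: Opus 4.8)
The plan is to push the structural lemmas already at hand to their logical conclusion and then rule out, by a crude magnitude estimate, that any single cluster of the solution graph $G'$ can absorb two cliques of the same type. Since $G'$ is obtained from the cluster graph $G$ by adding edges, every cluster of $G'$ is a disjoint union of small, medium-sized, and large cliques of $G$, and these unions partition the $3n$ original cliques. By \cref{lem:cta-nph-n}, $G'$ has exactly $n$ clusters, and by \cref{lem:cta-nph-cc} no two large cliques lie in the same cluster; as there are $n$ large cliques, each cluster of $G'$ contains exactly one of them. By \cref{lem:cta-nph-bc} and \cref{lem:cta-nph-ac}, every medium-sized clique and every small clique is merged with exactly one large clique, so each of them lies in exactly one cluster of $G'$.

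First I would show that no cluster of $G'$ contains two medium-sized cliques. Suppose a cluster contained distinct $b'_i, b'_j$ together with its unique large clique $c'_\ell$. Then this cluster would have at least $b'_i + b'_j + c'_\ell \ge 2(B+1) + (C+1) = 2B + C + 3$ vertices, whereas by \cref{lem:cta-nph-t} every cluster has exactly $t' = t + A + B + C$ vertices. Since $t \le 3n^d$ (because $nt = \sum_i(a_i+b_i+c_i) \le 3n^{d+1}$), this forces $3n^d + n^{2d} + 3 > 2B + C + 3 - A - C = \ldots$, i.e.\ it suffices to check $t + A < B + 3$, which follows from $3n^d + n^{2d} - 3 < n^{3d}$ — true for all $n \ge 3$, $d \ge 1$ — a contradiction. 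As there are $n$ medium-sized cliques and $n$ clusters, each cluster therefore contains exactly one medium-sized clique. An analogous argument rules out two small cliques in one cluster: a cluster containing distinct $a'_i, a'_j$, its medium-sized clique, and its large clique would have at least $2(A+1) + (B+1) + (C+1) = 2A + B + C + 4$ vertices, and $2A + B + C + 4 > t'$ reduces to $A + 4 > t$, which certainly holds since $A + 4 > n^{2d} > 3n^d \ge t$ for $n \ge 3$, $d \ge 1$. Hence each cluster contains exactly one small clique.

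To conclude, I would simply count: the construction produces $3n$ original cliques ($n$ small, $n$ medium-sized, $n$ large), these are partitioned among the $n$ clusters of $G'$, and we have just shown that each cluster contains exactly one clique of each of the three types; since $n \cdot 3 = 3n$, this accounts for all original cliques, so every cluster of $G'$ consists of precisely one small, one medium-sized, and one large clique.

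I do not expect a genuine obstacle here: everything is routine bookkeeping once \cref{lem:cta-nph-cc,lem:cta-nph-bc,lem:cta-nph-ac,lem:cta-nph-n,lem:cta-nph-t} are available. The only point requiring a little care is verifying the two magnitude inequalities ($n^{3d} > 3n^d + n^{2d} - 3$ and $n^{2d} + 4 > 3n^d$) down to the boundary case $n = 3$, $d = 1$; the polynomial gaps between $A = n^{2d}$, $B = n^{3d}$, and $C = n^{7d}$ leave comfortable slack, so this is straightforward.
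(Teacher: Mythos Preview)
Your argument is correct and follows essentially the same route as the paper: both hinge on \cref{lem:cta-nph-t} (every cluster has exactly $t'=t+A+B+C$ vertices) together with the magnitude gaps $t\le 3n^d \ll A \ll B \ll C$ to force one clique of each type per cluster. The paper's proof is simply terser—it cites only \cref{lem:cta-nph-t} and leaves the ``unique base-representation'' reasoning implicit—whereas you additionally lean on \cref{lem:cta-nph-cc,lem:cta-nph-bc,lem:cta-nph-ac,lem:cta-nph-n} and do the pigeonhole counting explicitly; this is more verbose but entirely sound (and arguably clearer). One cosmetic remark: the displayed inequality ``$3n^d + n^{2d} + 3 > 2B + C + 3 - A - C = \ldots$'' is garbled; what you actually need (and later state correctly) is $t+A < B+3$, i.e.\ $3n^d + n^{2d} - 3 < n^{3d}$, which indeed holds for $n\ge 3$, $d\ge 1$.
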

\begin{proof}
	By \cref{lem:cta-nph-t}, every clique in a solution graph contains exactly~$t'$ vertices.
	As~$t' = t + A + B + C = t + n^{2d} + n^{3d} + n^{7d}$ and~$t \le 3n^d$, every clique in the solution graph must consist of a small, a medium-sized, and a large clique;
	otherwise the clique cannot consist of exactly~$t'$ vertices.
\end{proof}

Now, proving the backward direction of our reduction is straightforward.

\begin{lemma}
	\label{lem:cta-backward}
	Let~$I$ is an instance of \prob{Numerical 3D Matching}
	and let~$I'$ be the instance of \CTA{} obtained by applying \cref{constr:cta-nph} on~$I$.
	If~$I'$ is a \yes-instance, then so is~$I$.
\end{lemma}

\begin{proof}
	Let~$S$ be a solution for instance~$I'$ and let~$G' \coloneqq (V(G), E(G) \cup S)$ be the corresponding solution graph.
	By \cref{lem:cta-nph-n}, $G'$ consists of~$n$ clusters of size~$s_1, s_2, \dots, s_n$.
	By \cref{lem:cta-nph-abc}, there are~$\alpha, \beta, \gamma \colon \oneto{n} \to \oneto{n}$ such that, for every~$i \in \oneto{n}$, we have
	\[ s_i = a'_{\alpha(i)} + b'_{\beta(i)} + c'_{\gamma(i)} = a_{\alpha(i)} + b_{\beta(i)} + c_{\gamma(i)} + A + B + C = t + A + B + C. \]
	Hence, $\alpha, \beta, \gamma$ is a solution for instance~$I$.
\end{proof}

We now have everything at hand to prove \cref{thm:cta-nph}.

\begin{proof}[Proof of \cref{thm:cta-nph}]
	We use \cref{constr:cta-nph} to build an instance~$I'$ of \CTA{} from a given instance~$I$ of \textsc{Numerical 3D Matching}.
	Clearly, $I'$ can be computed in polynomial time.
	By \cref{obs:cta-forward} and \cref{lem:cta-backward}, $I$ is a \yes-instance if and only if~$I'$ is a \yes-instance.
\end{proof}

\section{Empirical Insights into the Price of Fairness}
\label{sec:experiments}

\newcommand{\ceexperimentpath}{experiment-data}
We now study our model of modification fairness empirically, the focus being the \emph{price of modification fairness}:
How fair are colorblind solutions, and how much do we have to pay (in solution cost and in computation time) in comparison with colorblind solutions, to obtain a (sufficiently) fair solution?
In the spirit of \citet{BBK11} who studied classic \CE{}, we refrain from using our algorithm proving fixed-parameter tractability
(\cref{thm:fpt} is rather a classification result)
but instead rely on mathematical programming to investigate our model of modification fairness.

\paragraph{Setup.}
We computed optimal solutions for \FCE{} using an integer linear programming (ILP) formulation of our problem fed into the commercial solver Gurobi 8.1.1.
The ILP formulation is based on the standard formulation for \textsc{Cluster Editing}~\cite{GW89}, wherein one has a binary variable~$x_{uv}$ for every~$\{u,v\} \in \binom{V(G)}{2}$, indicating whether or~not the solution graph contains the edge~$\{u,v\}$, and three constraints for every vertex triple, which ensure that the triple does not induce a~$P_3$.
The formulation can be easily extended to a formulation for \FCE{} by adding a constraint that ensures that the upper bound~$\delta$ on our fairness~measure~$\diff$ holds.
For every~$\{u,v\} \in \binom{V(G)}{2}$ we set~$\eta_{uv} \coloneqq 1$ if~$\{u, v\} \in E(G)$ and~$\eta_{uv} \coloneqq 0$ otherwise,
and for every~$v \in V(G) = R \uplus B$, we set~$\gamma_v \coloneqq \abs{R}$ if~$v \in R$ and~$\gamma_v \coloneqq -\abs{B}$ if~$v \in B$.
Our ILP formulation now is as follows.
\begin{align*}
	& \text{minimize:} & & \sum_{\{u, v\} \in \binom{V(G)}{2}} \big((1-\eta_{uv}) x_{uv} + \eta_{uv} (1-x_{uv})\big)\\
	& \text{subject to:} &		& + x_{uv} + x_{vw} - x_{uw} \le 1,	\quad\quad\quad\quad\quad\quad \{u, v, w\}	\in \textstyle\binom{V(G)}{3}\\
	&&				& + x_{uv} - x_{vw} + x_{uw} \le 1,	\quad\quad\quad\quad\quad\quad \{u, v, w\}	\in \textstyle\binom{V(G)}{3}\\
	&&				& - x_{uv} + x_{vw} + x_{uw} \le 1,	\quad\quad\quad\quad\quad\quad \{u, v, w\}	\in \textstyle\binom{V(G)}{3}\\
	&&				-\delta \le & \sum_{\{u,v\}\in\binom{V(G)}{2}} \sum_{w \in \{u, v\}} \big((1-\eta_{uv})x_{uv}+\eta_{uv}(1-x_{uv})\big) / \gamma_w \le \delta.
\end{align*}

In order to make the results within the datasets comparable, we introduce a normalized fairness measure~$\Diffnorm(S) \coloneqq \diff(S)/(\nicefrac{2\abs{S}}{\min \{|R|, |B|\}})$.
Clearly, $\Diffnorm(S) \ge 0$, and by our upper bound on~$\diff$ from \cref{obs:bounds}(iii), we have~$\Diffnorm(S) \le 1$, and a solution~$S$ with~$\Diffnorm(S) = 1$ would be maximally unfair.
Analogously, we define~$\diffnorm \coloneqq \delta/(\nicefrac{2k_{\infty}}{\min \{|R|, |B|\}})$, wherein~$k_{\infty}$ is the minimum size of a colorblind ($\delta = \infty$) solution.
Hence, if~$\diffnorm = 0$, then we enforce our solution to be perfectly fair,
whereas, if~$\diffnorm = 1$, then our instance is an instance of standard \textsc{Cluster Editing}, see the discussion after \cref{obs:bounds}.
In \cref{sec:price} we will set~$\delta$ such that it reflects chosen values of~$\diffnorm$.

The experiments were run on machines with an Intel Xeon W-2125 4-core 8-thread CPU clocked at 4.0 GHz and 256GB of RAM, running Ubuntu~18.04.
All material to reproduce the results is publicly available.\footnote{\url{https://git.tu-berlin.de/akt-public/mod-fair-ce}}

For each instance, we set a time limit of one hour for the solving time (excluding the build time).
Whenever Gurobi could not report an optimal solution within that time, we report on the gap obtained in the given time.
Recall that, for a minimization problem (such as ours), the gap is defined as $(z_U-z_L)/z_U$, where~$z_U$ is the smallest (feasible) solution and~$z_L$ is the largest solution lower bound that Gurobi could find within the time limit.
Hence, the gap is at least zero and, whenever a feasible solution was found, at most one.

\paragraph{Datasets.}
We evaluate our model on two datasets.
The first is the SNAP Facebook data set~\cite{snapnets}, which lists for each person (vertex) their gender (color) as well as their friends (edge).
As this data was gathered from Facebook before~2012, the data on gender is binary.
The dataset contains nine graphs.
For each of these graphs and for each~$n \in \{20, 40, 80, 120\}$ and each~$p \in \{0.1, \dots, 0.5\}$, we sampled a subgraph with~$n$ vertices, roughly $pn$ of which were colored red.
As the graphs in the dataset did not have sufficiently large connected components, the goal was to have roughly equally large components.
To this end, among all components but the largest, we chose sufficiently many uniformly at random.
Then, from the largest component, we used a random breadth-first based approach to select a connected subgraph, wherein we randomly selected the next vertex with a bias towards red or towards blue vertices so as to ensure that (roughly) $pn$ vertices were colored red.
For the graphs with~$n = 20, 40$, half of the graphs have one component, and the average number of components is $1.7$ and~$2.6$, respectively.
For the graphs with~$n = 80, 120$, half of the graphs have two components, and the remaining graphs have more components; the average number of components is~$3.0$ and~$3.9$, respectively.
The graphs with~$n = 120$ had a maximum of $9$ components.
Note that, while in standard \CE{}, each connected component can be solved individually, this is not the case for the modification-fair variant as our fairness measure encompasses all components.
We decided not to sample any larger graphs as already our standard solver needed more than half an hour on average to solve standard, colorblind \CE{} on the largest graphs (see \cref{tab:dataset}).
The graphs are grouped into four sets according to~$n$.

The second data set is a product co-review graph based on data from Amazon \cite{amazon}, which was already used to analyze a different fairness model for \CE{} \cite{AEKM20}.
Herein, we have a vertex for each product and an edge whenever two products were reviewed by the same person.
Each vertex belongs to one of five categories out of which we sampled the products.
For each (unordered) pair of product categories,
each~$n \in \{40, 80, 120, 160, 200\}$, and each~$p \in \{0.1, \dots, 0.5\}$,
we sampled subgraphs with roughly~$n$ vertices out of the two categories, roughly $pn$ of which belonged to the one category.
The sampling procedure was slightly different than the one used for the Facebook graphs:
We also used random walks on the smaller connected components so that the proportion of red vertices could be close to~$pn$.
Moreover, we selected the same proportion of vertices from the randomly chosen components.
For each~$n \in \{40, 80, 120, 160, 200\}$, more than half of the sampled graphs had one component.
The average number of components were~$1.5$, $2.0$, $3.0$, $3.9$, and~$4.0$, respectively.
The graphs with~$n \in \{160,200\}$ had a maximum of~$23$ connected components.
Again, the choice in vertices is reflected by the time it took to solve standard, colorblind \CE{} on the graphs (see \cref{tab:dataset}).
We group the graphs into five sets according to~$n$.

\cref{tab:dataset} gives an overview over the graphs, as well as the minimum solution sizes~$k_\infty$ of colorblind solutions and the running time~$t_\infty$ needed to compute said colorblind solutions.
\begin{table}[t]
	\centering
	\caption{Mean of the number of vertices ($n$) and edges ($m$) as well as minimum, mean, and maximum of the solution size~$k_{\infty}$ and running time~$t_{\infty}$ in seconds required to solve standard, colorblind \CE{} on the graphs in our dataset.}
	\label{tab:dataset}
	\small
	\begin{tabular}{l c r r r r r r r r r r r r}
		\toprule
		& & & & \multicolumn{3}{c}{$k_\infty$} & \multicolumn{3}{c}{$t_\infty$ [$s$]}\\
		\cmidrule{5-7}\cmidrule{8-10}
&Set&$n$&$m$&min&mean&max&min&mean&max\\
\midrule
\parbox[t]{5mm}{\multirow{4}{*}{\rotatebox[origin=c]{90}{Facebook}}}
&1& 20.0&  72.2&  8& 28.8&  51&0.0172&0.0327&0.1114\\
&2& 40.0& 230.8& 39&105.4& 185&0.1465&1.3711&12.047\\
&3& 80.0& 902.3&166&432.2& 891&1.2386&267.50&2933.5\\
&4&120.0&1509.8&315&941.4&2992&7.9719&2129.8&3600.1\\
\midrule
\parbox[t]{5mm}{\multirow{5}{*}{\rotatebox[origin=c]{90}{Amazon}}}
&1& 41.7& 109.1& 14& 51.2& 145&0.0489&0.4423&3.3766\\
&2&  83.2& 287.0& 40&151.6& 362&0.5635&16.159&255.12\\
&3& 125.1& 518.1& 61&274.1& 510&5.8775&277.95&3600.1\\
&4&166.8& 808.8& 96&450.2&1294&16.508&1202.5&3600.5\\
&5&207.9&1114.0&107&658.9&1398&58.552&2572.0&3603.0\\
\bottomrule
	\end{tabular}
\end{table}

\subsection{How fair is the colorblind variant?}

We first evaluate the modification fairness of standard, colorblind \CE{}.
\begin{figure}[t]
	\centering
	\includegraphics[width=\columnwidth]{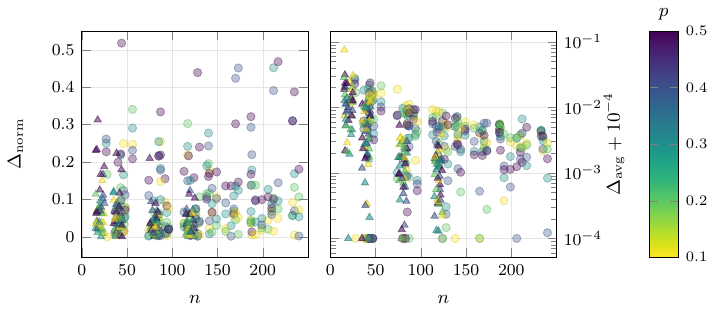}
	\caption{
		\emph{How fair is the ``non-fair'' variant?}
		We compare the normed ($\Diffnorm$) and average ($\Diffavg$) modification fairness of the optimal solution for standard \CE{} to the number~$n$ of vertices and the ratio~$p$ red vertices (color).
		Facebook instances are displayed as triangles, Amazon instances are displayed as circles.
		As the Facebook instances admit four distinct values of~$n$, we add a random jiggle to make similar entries more distinguishable, that is, for Facebook instances, we display~$n+r$ for~$r \in [-5,5]$ chosen uniformly at random.
		The average modification fairness (right) is displayed on a log scale; we add $10^{-4}$ to each entry to make entries with~$\Diffavg = 0$ visible.
	}
	\label{fig:standard-ce-fairness}
\end{figure}
\cref{fig:standard-ce-fairness} shows the number~$n$ of vertices and the modification fairness for each of our instances when run with~$\diffnorm=1$.
Overall, $\Diffnorm$ does not exceed~$0.52$, with~$\Diffnorm < 0.055$ for 50\% and $\Diffnorm < 0.11$ for 75\% of the instances.
For the Amazon graphs, the value is slightly higher with~$\Diffnorm < 0.07$ for 50\% and~$\Diffnorm < 0.15$ for 75\% of the instances,
whereas for Facebook graphs, $\Diffnorm$ does not exceed $0.31$ and is less than~$0.04$ for 50\% and less than~$0.07$ for~75\% of the instances.

In \cref{fig:standard-ce-fairness}, one may observe that
for the Facebook instances with~$n = 120$, $\Diffnorm$ does not exceed 0.13, and for all but one of the Facebook instances with~$n=80$, $\Diffnorm$ is below 0.1.
Indeed,
the means of $\Diffnorm$ are~$0.08$, $0.05$, $0.03$ and $0.03$ for~$n$ being~$20$, $40$, $80$ and $120$, respectively,
that is, $\Diffnorm$ tends to decrease (that is, fairness increases) with increasing number of vertices for the Facebook instances.
The same cannot be observed for Amazon graphs.
Here, the mean of $\Diffnorm$ is more evenly spread,
being~$0.11$, $0.10$, $0.11$, $0.11$ and~$0.12$ for~$n$ being in~$(0,48]$, $(48,96]$, $(96,144]$, $(144,192]$ and $(192,240]$, respectively.
The figure also suggests that there are more outliers among the Amazon instances.
This is backed by the standard deviation, which is~$0.052$ for the Facebook instances and~$0.104$ for the Amazon instances.

In the left plot in \cref{fig:standard-ce-fairness}, one may see that almost all Amazon instances with~$\Diffnorm > 0.3$, especially those with more than~$100$ vertices, have~$p \ge 0.4$.
In the right plot, we evaluate~$\Diffavg(S) \coloneqq \diff(S)/(2\abs{S})$, which norms our fairness measure by the size of the solution
and hence measures for each vertex not the number of incident edits but the percentage of how many of the overall edits are incident to it.
Note that $\Diffavg = \Diffnorm / \min\{|R|, |B|\} \approx \Diffnorm / (np)$, as~$p \le 0.5$ and roughly~$np$ vertices are red.
We cannot derive any correlation between~$p$ and~$\Diffavg$.
We can, however, observe an exponential decay in~$\Diffavg$ with increasing~$n$.
Further, we can observe that the Facebook instances tend to have smaller values of~$\Diffavg$ than Amazon instances with similar numbers of vertices.
Indeed, for instances with~$n \le 75$, the mean values of $\Diffavg$ for Facebook and Amazon instances are~$0.0092$ and~$0.0124$.
For instances with~$75 < n \le 150$, the mean values are~$0.0013$ for Facebook and~$0.0047$ for Amazon.
The Amazon instances with~$n > 150$ have a mean value of~$0.0031$.

For small graphs, the modification fairness is rather low, with $\Delta_{\text{norm}} \ge 0.1$ for 35\% of the graphs in Sets 1 and 2.
For larger graphs, however, even without imposing fairness constraints the solution is already very fair, the mean value of $\Delta_{\text{norm}}$ being 0.05 for graphs in Set 4.
\cref{fig:standard-ce-fairness} further shows that our tested graphs do not allow for a statement whether the initial modification fairness correlates with the ratio between red and blue vertices.

\subsection{The price of fairness}
\label{sec:price}
We next evaluate the \emph{price of fairness}, that is, how much the solution size and the running time increase when requiring the solutions to be fair, wherein we set~$\diffnorm$ to be~$0, 0.01, 0.02, \dots, 0.05$.
As the running time for Set 5 of the Amazon graphs was already very close to our time limit, we evaluate the price of fairness only on Sets 1 to 4 for both Facebook and Amazon graphs.

\begin{table}
	\centering
	\caption{\emph{How much extra time do we need to be fair?} For each of the four sets of the Amazon and Facebook instances,
	we show the mean computation times in seconds for computing \FCE{} with~$\diffnorm$ ranging from~$1$ (colorblind, standard \CE{}) to~$0$ (perfectly fair).
	Recall that the time limit was set to~$3600$ seconds.}
	\label{tab:extra-time}
	\small
{
	\setlength{\tabcolsep}{5pt}
\begin{tabular}{lrrrrrrrrr}
\toprule
$\diffnorm$ & \multicolumn{4}{c}{Amazon Sets 1--4} & \multicolumn{4}{c}{Facebook Sets 1--4}\\
\cmidrule{2-9}
&   \footnotesize $n\approx 40$ &  \footnotesize  $n\approx 80$ &  \footnotesize  $n\approx 120$ & \footnotesize   $n\approx 160$ &\footnotesize $n=20$ & \footnotesize  $n=40$ & \footnotesize  $n=80$ & \footnotesize  $n=120$ \\
\midrule
1                   &    0.44 &   16.16 &  277.95 & 1202.5 & 0.03 &   1.37 &  267.50 & 2129.8 \\
0.05                &    0.49 &   19.88 &  273.15 & 1126.3 & 0.06 &   1.57 &  308.19 & 2174.7 \\
0.04                &    0.54 &   23.75 &  283.39 & 1168.3 & 0.05 &   2.01 &  317.86 & 2168.2 \\
0.03                &    0.56 &   24.28 &  245.19 & 1199.0 & 0.05 &   1.84 &  319.26 & 2196.4 \\
0.02                &    0.80 &   20.27 &  271.08 & 1175.7 & 0.06 &   1.82 &  512.37 & 2179.3 \\
0.01                &    0.81 &   21.16 &  363.61 & 1245.9 & 0.14 &   2.25 &  490.06 & 2281.0 \\
0                &  1223.8 &  3202.0 &  2945.6 & 3188.4 & 1.47 & 601.30 &  2130.3 & 2939.0 \\
\bottomrule
\end{tabular}
}

\end{table}

\cref{tab:extra-time} shows that requiring perfect fairness results in prohibitively high running time.
Allowing a little bit of slack in the fairness however yields significantly lower increments, if any, in running time:
In some cases (such as Sets 3 and 4 of the Amazon graphs), the running time for~$\diffnorm = 0.05$ is even lower than the one for the colorblind ($\diffnorm=1$) instance.
We remark that for all instances of Sets 1 and 2 with~$\diffnorm > 0$ ran within the time limit.
For the instances of Set 3 with~$\diffnorm>0$, the ILP gap did not exceed~$0.03$.
The gaps for the instances of Set 4 as well as the instances with~$\diffnorm=0$ grew as high as~$0.65$.
Of the~$170$ Facebook instances with~$\diffnorm=0$, the solver was able to compute a feasible solution for all but two instances, and ran within the time limit for $65\%$ of the instances.
For the Amazon instances with~$\diffnorm=0$, the solver only computed feasible solutions to~$98$ of the~$178$ instances, for~$26$ of which it was able to run within the time limit.
As for the instances of Set 4 with~$\diffnorm > 0$, $235$ out of~$288$ Amazon instances and~$116$ out of~$240$ Facebook instances were solved within the time limit.

\begin{figure}[t]
	\centering
	\includegraphics[width=\columnwidth]{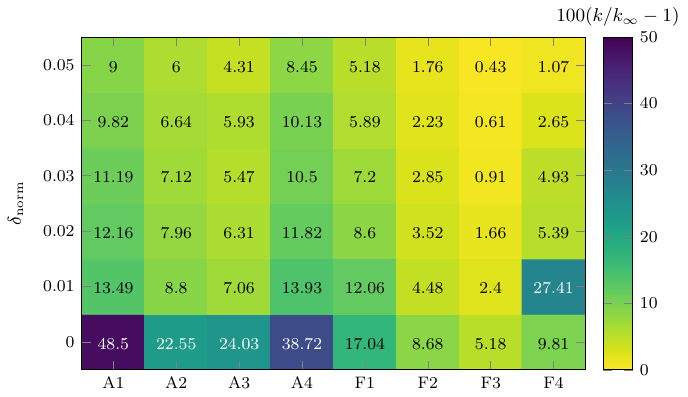}
	\caption{
		\emph{How many extra edits do we need to be fair?}
		Each heat map cell contains the mean percentage by which a minimum solution with fairness of~$\diffnorm$ is larger than the colorblind solution for Sets 1--4 of the Amazon (A) and Facebook (F) graphs, see \cref{tab:dataset}.
	}
	\label{fig:extra-k}
\end{figure}

Let us next consider the percentage by which the solution size~$k$ needs to increase in comparison with the minimum colorblind solution~$k_\infty$.
As one can see in \cref{fig:extra-k}, the solution size needs to increase only slightly, often by less than~$10\%$, in order to obtain a solution that is \emph{almost} fair.
Requiring perfect fairness~$\diffnorm=0$, however, results in a significantly higher increase in solution size.
We can also observe that from Sets 1 to Sets 3, there is a downwards trend in the increase in the solution size, but, for Set 4 instances, the increase in solution size becomes larger again.
To find a possible explanation for this phenomenon, we ask the reader to examine \cref{fig:k-vs-gap}, in which we compare the solution size increase with the colorblind solution size and the ILP gap.
\begin{figure}[t]
	\centering
	\includegraphics[width=\columnwidth]{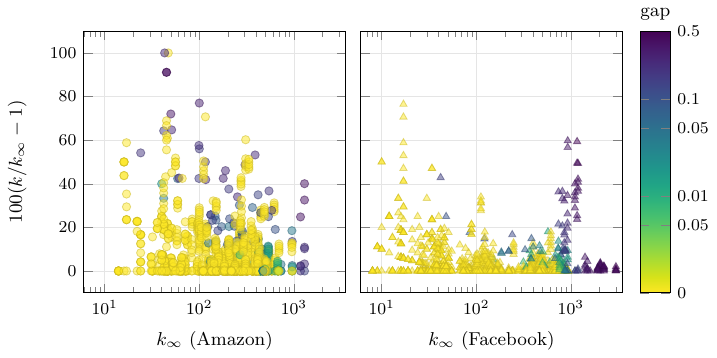}
	\caption{
		We compare for each instance the size of a minimum colorblind solution ($x$-axis) with the percentage by which the minimum fair solution is larger ($y$-axis) for the set of Amazon and Facebook graphs.
		Instances whose percentage was above~$100$ (roughly $1.2\%$ of the Amazon and $2.5\%$ of the Facebook instances) are not displayed.
		The instances are colored by the ILP gap --- note that the coloring follows a logarithmic scale (we added $10^{-3}$ to each gap so as to color zero gaps as well).
	}
	\label{fig:k-vs-gap}
\end{figure}
We can see for both the Amazon and the Facebook graphs that
the fair solution size~$k$ tends to be larger whenever the colorblind solution size~$k_\infty$ is small.
Since~$k_\infty$ correlates with the number~$n$ of vertices (cf.~\cref{tab:dataset}), this fits our findings in \cref{fig:extra-k} for Sets 1 through 3 of both the Amazon and Facebook instances.
A likely reason for this is that, with the smaller solution size, a single edge has a higher impact on the modification fairness, i.e., balancing out the modifications requires proportionally more edits.
A possible answer to why the solution size increases more in graphs with large colorblind solution lies in the gap:
The size of the minimum solution may be smaller than the size of the found (feasible) solution by a fraction of the gap.
This is most evident for the Facebook instances. %
The gap may also be an explanation for the large solution size increase for instances with~$\diffnorm=0$ which we observed in \cref{fig:extra-k}:
As the computation time for the instances with~$\diffnorm=0$ often hit the time limit (cf.~\cref{tab:extra-time}), the gap for these instances may also have been very high in comparison to the gap for the colorblind solution.
Finally, the sudden decrease in the solution size increase for the Facebook instances of Set~4 with~$\diffnorm=0$ compared to the increase with~$\diffnorm=0.01$ is likely due to the fact that for those instances that already were hard to solve with~$\diffnorm=0.01$, our solver could not find a feasible solution for~$\diffnorm=0$ at all.

\paragraph{Discussion.}
There are four main takeaways.
First, on the chosen datasets, the colorblind, standard \CE{} solution seems already rather fair.
Second, the price of fairness in terms of extra running time is low as long as one does not ask for perfect fairness.
However, asking for perfectly fair solutions may become prohibitively expensive.
Third, the price of fairness in terms of solution cost is also low as long as one does not ask for perfect fairness.
Fourth, in all of the above, the ratio between blue and red vertices does not matter very much.

While we can safely state these takeaways from the presented experiments, they are only a first step in evaluating the price of modification fairness.
We would like to conclude this section with some suggestions for future, extended experiments.
First of all, we believe that the experiments are slightly impaired by the fixed time limit.
For future experiments it may be sensible to choose the time limit some constant factor higher than the running time needed to solve colorblind \CE{} on the respective instance.
This would allow for a cleaner analysis of the price of fairness in terms of extra running time.
Also the analysis of the price of fairness in terms of solution size would improve, as the increasing gaps would no longer interfere with the analysis.
Secondly, it would be interesting to study the price of fairness with~$0 < \diffnorm < 0.01$ to figure out the point at which the price of fairness ``explodes''.
Maybe choosing the fairness thresholds from an exponential norm (i.e., $\diffnorm = 10^{-1}, 10^{-2}, 10^{-3}, \dots$) is also more sensible.
More generally, it would be interesting to know which fairness threshold should be considered \emph{reasonable}, or for which fairness threshold one should aim in practice.
These questions come in hand with the more general question of what should be defined as ``fair'', which is a general contentious issue in fairness in algorithms \citet{pessach2023fairness}.

\section{Conclusion}
\label{sec:conclusion}
With our work, we hope to have provided a first step 
towards process-oriented fairness in graph-based data clustering.
Focusing on our newly introduced problem \FCE, there 
are many research challenges.
For instance, in \cref{thm:fpt} we showed that \FCE{} is fixed-parameter 
tractable for the parameter number~$k$ of edge modifications. 
The corresponding exponential factor is~$2^{\bigO(k\log k)}$ --- can we improve on 
this or can we exclude a running time of $2^{o(k\log k)}$ unless the ETH fails?\footnote{We remark that for classic \CE{} there is a tight bound $2^{\Theta(k)}$~\cite{KU12}.}
Further, is \FCE{} parameterized by the number of mono-colored edge modifications~$\mu$ fixed-parameter tractable or \Wone-hard?

A canonical way to continue the studies on \FCE{} is to consider the case of more than two colors.
Indeed, for a constant number of colors, a natural extension of our problem should remain fixed-parameter tractable with respect to the number of edge modifications (cf.~\cref{thm:fpt}):
The number of cliques to keep then depends on the number of colors.
Further, one could study other definitions of fairness over the modifications.

Speaking more generally, one could also combine our process-oriented fairness with other concepts, i.e., the above-mentioned output-oriented fairness \citet{ahmadi2022fair,AEKM20,ahmadian2022improved,schwartz2022correlation}.
Finally, the fairness investigations could be extended to 
generalizations of \CE{} such as \textsc{Hierarchical Tree Clustering}~\cite{GHKNU10}, \textsc{$s$-Plex Cluster Editing}~\cite{GKNU10} or temporal or dynamic versions of \CE{} and related problems, 
e.g.\ \CE{} in temporal graphs~\cite{CMSS18} or dynamic \CE~\cite{LMNN21}.

\bibliographystyle{plainnat}
\bibliography{strings-long,bibliography}

\ifappendix{}
\clearpage
\appendix
\section*{Appendix}
\appendixProofText
\fi{}

\end{document}